\renewenvironment{proof} {\textsc{Proof}\quad} {\hfill $\Box$\\}
\newcommand{\Khm}[1]{\ensuremath{\mathcal{K}hm{(#1)}}}
\newcommand{\Khmm}{{\ensuremath{\mathcal{K}hm}}}
\newcommand{\Kh}[1]{\ensuremath{\mathcal{K}h{(#1)}}}
\newcommand{\U}{\mathcal{U}}
\newcommand{\lr}[1]{\langle #1 \rangle}
\newcommand{\lra}{\leftrightarrow}
\newcommand{\rel}[1]{\xrightarrow{#1}}
\newcommand{\calS}{\mathcal{S}}
\newcommand{\calM}{\mathcal{M}}
\newcommand{\M}{\calM}
\newcommand{\calR}{\mathcal{R}}
\newcommand{\calV}{\mathcal{V}}
\newcommand{\V}{\calV}
\newcommand{\Lef}[1]{{\ensuremath{\mathtt{L}(#1)}}}
\newcommand{\Rig}[1]{{\ensuremath{\mathtt{R}(#1)}}}
\newcommand{\MCS}{{\ensuremath{\Phi_\Gamma}}}
\newcolumntype{L}{>{$}l<{$}} 
\newcommand{\AxRepKhm}{\ensuremath{\mathtt{UKhm}}}
\newcommand{\ThKhmL}{\texttt{ULKhm}}
\newcommand{\ThKhmM}{\texttt{UMKhm}}
\newcommand{\ThKhmR}{\texttt{URKhm}}
\newcommand{\EMPKhm}{\texttt{EMPKhm}}
\newcommand{\COMP}{\texttt{COMPKhm}}
\newcommand{\KHDisjunction}{\texttt{ONEKhm}}
\newcommand{\Univers}{\texttt{UNIV}}
\newcommand{\REU}{\texttt{REU}}
\newcommand{\RE}{\texttt{RE}}
\newcommand{\BP}{\ensuremath{\mathbf{P}}}
\newcommand{\LKhm}{\mathbf{L_{Khm}}}
\newcommand{\Act}{\ensuremath{\mathbf{\Sigma}}}
\newcommand{\R}{\mathcal{R}}
\renewcommand{\S}{\mathcal{S}}
\newcommand{\TAUT}{\ensuremath{\mathtt{TAUT}}}
\newcommand{\DISTU}{\ensuremath{\mathtt{DISTU}}}
\newcommand{\COMPKh}{\ensuremath{\mathtt{COMPKh}}}
\newcommand{\WSKh}{\ensuremath{\mathtt{UKh}}}
\newcommand{\MP}{\ensuremath{\mathtt{MP}}}
\newcommand{\SUB}{\ensuremath{\mathtt{SUB}}}
\newcommand{\NECU}{\ensuremath{\mathtt{NECU}}}
\newcommand{\AxTransKU}{\ensuremath{\mathtt{4KhmU}}}
\newcommand{\AxEucKU}{\ensuremath{\mathtt{5KhmU}}}
\newcommand{\AxTrU}{\ensuremath{\mathtt{TU}}}
\newcommand{\AxTransU}{\ensuremath{\mathtt{4U}}}
\newcommand{\AxEucU}{\ensuremath{\mathtt{5U}}}
\newcommand{\EMP}{\ensuremath{\mathtt{EMPKh}}}
\newcommand{\SKhm}{\mathbb{SKHM}}
\newcommand{\SKh}{\mathbb{SKH}}
\renewcommand{\phi}{\varphi}
\title{Achieving while maintaining:}
\author{Yanjun Li\inst{1} \and Yanjing Wang\inst{2} }
\institute{Faculty of Philosophy, University of Groningen, The Netherlands \and Department of Philosophy, Peking University, China}
\date{}
\begin{document}
	\maketitle
	
\begin{abstract}
In this paper, we propose a ternary knowing how operator to express that the agent knows how to achieve $\phi$ given $\psi$ while maintaining $\chi$ in-between. It generalizes the logic of goal-directed knowing how proposed by Wang in \cite{Wang15lori}. We give a sound and complete axiomatization of this logic. 
\end{abstract}    
    
\section{Introduction}
Standard epistemic logic proposed by von Wright and Hintikka studies propositional knowledge expressed by ``knowing that $\phi$'' \cite{Wright51,Hintikka:kab}. However, there are very natural knowledge expressions beyond ``knowing that'', such as ``knowing what your password is'', ``knowing why he came late'', ``knowing how to go to Beijing'', and so on. In recent years, there have been attempts to capture the logic of such different kinds of knowledge expressions by taking the ``knowing X'' as a single modality \cite{WangF13,WangF14,FWvD14,FWvD15,GW16,Wang15lori}.\footnote{See \cite{Wang16} for a survey.} 

In particular, Wang proposed a logical language of goal-directed knowing how \cite{Wang15lori}, which includes formulas $\Kh{\psi, \phi}$ to express that the agent knows how to achieve $\phi$ given the precondition $\psi$.\footnote{See \cite{Gochet13,KandA15,Wang15lori} for detailed discussions on related work in AI and Philosophy.} The models are labeled transition systems which represent the agent's abilities, inspired by \cite{Wang15}. Borrowing the idea from conformant planning in AI (cf. e.g., \cite{SW98,YLW15}), $\Kh{\psi, \phi}$ holds globally in a labeled transition system, if there is an uniform plan such that from all the $\psi$-states this plan can always be successfully executed to reach some $\phi$-states. As an example, in the following model $\Kh{p, q}$ holds, since there is a plan $ru$ which can always work to reach a $q$-state from any $p$-state.   
$$\xymatrix{
&s_6&{{s_7:q}}&{{s_8: q}} &\\
s_1\ar[r]|r& s_2:p\ar[r]|r\ar[u]|u& s_3:p\ar[r]|r\ar[u]|u&{s_4:q}\ar[r]|r\ar[u]|u&s_5
}$$
In \cite{Wang15lori}, a sound and complete proof system is given, featuring a crucial axiom capturing the compositionality of plans: 
$$\COMPKh  \qquad \Kh{p, r}\land\Kh{r, q}\to\Kh{p, q}$$

However, as observed in \cite{LauWang}, constraints on how we achieve the goal often matter. For example, the ways for me to go to New York are constrained by the money I have; we want to know how to win the game by playing fairly; people want to know how to be rich without breaking the law. Generally speaking, actions have costs, both financially and morally, we need to stay within our ``budget'' in reaching our goals. Clearly such intermediate constraints cannot be expressed by $\Kh{\psi, \phi}$ since it only cares about the starting and ending states. This motivates us to introduce a ternary modality $\Kh{\psi,\chi,\phi}$ where $\chi$ constrains the intermediate states.\footnote{This ternary modality is first proposed and discussed briefly in the full version of \cite{Wang15lori}, which is under submission to a journal.}

In the rest of the paper, we first introduce the language, semantics, and a proof system of our logic in Section \ref{Sec.Logic}. In Section \ref{Sec.Proof} we give the highly non-trivial completeness proof of our system, which is much more complicated than the one for the standard knowing how logic. In the last section we conclude with future directions. 

	
	\section{The Logic}\label{Sec.Logic}
	\begin{definition}[Language]
		Given a set of proposition letters $\BP$, the language $\LKhm$ is  defined as follows:  
		$$\phi:=p\mid \neg\phi\mid (\phi\land\phi)\mid \Khm{\phi,\phi,\phi} $$
		where $p\in\BP$. $\Khm{\psi,\chi,\phi}$ expresses that the agent knows how to guarantee $\phi$ given $\psi$ while maintaining $\chi$ in-between (excluding the start and the end). Note that $\Khm{\psi\land\chi,\chi,\phi\land\chi}$ expresses knowing how with inclusive intermediate constraints. We use the standard abbreviations $\bot, \phi\lor\psi$ and $\phi\to\psi $, and  define $\U\varphi$ as $ \Khm{\neg\phi,\top,\bot}$. $\U$ is intended to be an universal modality, and it will become more clear after defining the semantics. Note that the binary know-how operator in \cite{Wang15} can be defined as $\Kh{\psi,\phi}:= \Khm{\psi, \top, \phi}$.
\end{definition}
	
	\begin{definition}[Model]
		Given a countable set of proposition letters $\BP$ and a countable non-empty set of action symbols $\Act.$
		A model (also called an ability map) is essentially a labelled transition system $(\S, \R,\V)$ where:
		\begin{itemize}
			\item $\S$ is a non-empty set of states;
			\item $\R: \Act\to 2^{\S\times \S}$ is a collection of transitions labelled by actions in $\Act$;
			\item $\V: S\to 2^\BP$ is a valuation function.
		\end{itemize}
		We write $s\rel{a}t$ if $(s, t)\in \R(a).$ For a sequence $\sigma=a_1\dots a_n\in\Act^*$, we write $s\rel{\sigma}t$ if there exist $ s_2\dots s_{n}$ such that $s\rel{a_1}s_2\rel{a_2}\cdots \rel{a_{n-1}}s_n\rel{a_n}t$. Note that $\sigma$ can be the empty sequence $\epsilon$ (when $n=0$), and we set $s\rel{\epsilon}s$ for any $s$. Let $\sigma_k$ be the initial segment of $\sigma$ up to $a_k$ for $k\leq |\sigma|$. In particular let $\sigma_0=\epsilon$. We say $\sigma=a_1\cdots a_n$ is strongly executable at $s'$ if for each $0\leq k <n$: 
$s'\rel{\sigma_k}t$ implies that $t$ has at least one $a_{k+1}$-successor.
	\end{definition}
    Intuitively, $\sigma$ is \textit{strongly executable} at $s$ if you can always successfully finish the whole $\sigma$ after executing any initial segment of $\sigma$ from $s$. For example, $ab$ is not strongly executable at $s_1$ in the model below, though it is executable at $s_1$.
$$\xymatrix@R-25pt{
&{s_2}\ar[r]|b&{s_4: q}\\
{s_1:p}\ar[ur]|a\ar[dr]|a\\
&{s_3}
}
$$	
	\begin{definition}[Semantics] 
		Suppose $s$ is a state in a model $\M=(\S,\R,\V)$. Then we inductively define the notion of a formula $\phi$ being satisfied (or true) in $\M$ at state $s$ as follows: 
		\begin{center}
			\begin{tabular}{|L L L|}
				\hline
				\M,s\vDash \top &  & always\\
				\M,s\vDash p &\iff& s\in \V(p).\\
				\M,s\vDash \neg\phi &\iff& \M,s\nvDash \phi.\\
				\M,s\vDash \phi\wedge\psi &\iff& \M,s\vDash\phi \text{ and } \M,s\vDash \psi.\\
				\M,s\vDash \Khm{\psi,\chi,\phi} &\iff&\text{there exists }\sigma\in\Act^*\text{ such that for each $s'$ with}\\
                &&\M,s'\vDash\psi\text{ we have $\sigma$ is strongly $\chi$-executable}\\ 
                &&\text{at $s'$ and $\M,t\vDash \phi$ for all $t$ with $s'\rel{\sigma}t$.} \\
				\hline
			\end{tabular}
		\end{center}
		where we say $\sigma=a_1\cdots a_n$ is strongly $\chi$-executable at $s'$ if:	\begin{itemize}
        \item $\sigma$ is strongly executable at $ s'$, and
        \item $s'\rel{\sigma_k}t$ implies $\M,t\vDash \chi$ for all $0<k<n$.
		\end{itemize}
\end{definition}

It is obvious that $\epsilon$ is strongly $\chi$-executable at each state $s$ for each formula $\chi$. Note that $\Khm{\psi,\bot,\phi}$ expresses that there is $\sigma\in\Act\cup\{\epsilon\}$ such that the agent knows doing $\sigma$ on $\psi$-states can guarantee $\phi$, namely the witness plan $\sigma$ is at most one-step. As an example, $\Kh{p,\bot,o}$ and $\Kh{p,o,q}$ hold in the following model for the witness plans $a$ and $ab$ respectively. Note that the truth value of $\Kh{\psi,\chi,\phi}$ does not depend on the designated state.
$$\xymatrix@R-25pt{
&{s_2:o}\ar[rd]|b&\\
{s_1:p}\ar[ur]|a\ar[dr]|b&&{s_4: q}\\
&{s_3:\neg o}\ar[ur]|a&
}
$$
Now we can also check that the operator $\U$ defined by $\Khm{\neg\psi,\top,\bot}$ is indeed an \textit{universal modality}:
$$\begin{array}{|rcl|}
\hline

\M,s\vDash \U \varphi&\Leftrightarrow& \text{ for all }t\in \S, \M, t\vDash\varphi   \\
\hline
\end{array}
$$

The following formulas are valid on all models.
	
\begin{proposition}\label{prop:validEMPKhm}
$\vDash \U(p \to q)\to \Khm{p,\bot,q}$
\end{proposition}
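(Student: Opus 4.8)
The plan is to unfold the semantics of both sides and then simply exhibit a witness plan. First I would fix an arbitrary pointed model $(\M,s)$ and assume the antecedent $\M,s\vDash \U(p\to q)$. By the characterization of $\U$ as a genuine universal modality established just above the proposition, this assumption says precisely that every state of $\M$ satisfies $p\to q$; equivalently, every $p$-state of $\M$ is a $q$-state. Since the truth of a $\Khm{\cdot,\cdot,\cdot}$-formula does not depend on the designated state, it suffices to verify $\M,s\vDash\Khm{p,\bot,q}$ at this same $s$.

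To establish $\M,s\vDash\Khm{p,\bot,q}$ I must produce some $\sigma\in\Act^*$ that works uniformly from all $p$-states. The natural choice is the empty plan $\sigma=\epsilon$. This is motivated by the remark in the excerpt that any witness for a formula of the shape $\Khm{\psi,\bot,\phi}$ must have length at most one, because the intermediate constraint $\bot$ rules out any genuine intermediate state; the empty plan is the degenerate case that produces no intermediate states whatsoever, so the $\bot$-constraint is satisfied vacuously.

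With $\sigma=\epsilon$ fixed I would check the two requirements in the definition of $\Khm{p,\bot,q}$ for each $s'$ with $\M,s'\vDash p$. Strong $\bot$-executability at $s'$ is immediate, since the excerpt records that $\epsilon$ is strongly $\chi$-executable at every state for every $\chi$. For the truth condition on the endpoints, recall that $s'\rel{\epsilon}t$ holds exactly when $t=s'$, because $\rel{\epsilon}$ is the identity relation under the convention $s\rel{\epsilon}s$. Hence the only $\epsilon$-reachable state from $s'$ is $s'$ itself, and since $s'$ satisfies $p$ and, by the universal-modality hypothesis, also $p\to q$, we conclude $\M,s'\vDash q$. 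This discharges both clauses and gives $\M,s\vDash\Khm{p,\bot,q}$, completing the argument.

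I do not expect any real obstacle: the whole proof is a direct reading of the semantics. The only step requiring a moment's care is recognizing that the correct witness is the empty plan and that $\rel{\epsilon}$ collapses to the identity, so that the demand ``reach a $q$-state'' reduces to the pointwise implication $p\to q$ furnished by the hypothesis $\U(p\to q)$.
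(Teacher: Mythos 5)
Your proof is correct and matches the paper's own argument: both take the empty plan $\epsilon$ as the witness, use the fact that $\epsilon$ is strongly $\bot$-executable everywhere, and reduce the endpoint condition to the pointwise implication $p\to q$ supplied by $\U(p\to q)$. Your write-up is just a more explicit version of the same reasoning.
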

\begin{proof}
Assuming that $\M,s\vDash\U(p\to q)$, it means that $\M,t\vDash p\to q$ for all $t\in\calS$. Given $\M,t\vDash p$, it follows that $\M,t\vDash q$. Thus, we have $\epsilon$ is strongly $\bot$-executable at $t$. Therefore, we have $\M,s\vDash\Khm{p,\bot,q}$. 
\end{proof}

\begin{proposition}
$\vDash \Khm{p,o,r}\land\Khm{r,o,q}\land\U(r\to o)\to \Khm{p,o,q}$
\end{proposition}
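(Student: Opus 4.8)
The plan is to prove this compositionality principle by exhibiting an explicit witness plan, namely the concatenation of the two witnesses supplied by the antecedent. Assume $\M,s\vDash \Khm{p,o,r}\land\Khm{r,o,q}\land\U(r\to o)$ for an arbitrary model $\M=(\S,\R,\V)$ and state $s$; since the truth of these formulas does not depend on the designated state, I may speak of them as holding throughout $\M$. Let $\sigma_1$ witness $\Khm{p,o,r}$ and $\sigma_2$ witness $\Khm{r,o,q}$, and take $\sigma=\sigma_1\sigma_2$ as the candidate witness for $\Khm{p,o,q}$. Fixing an arbitrary state $s'$ with $\M,s'\vDash p$, it remains to verify the two requirements of the semantics: that $\sigma$ is strongly $o$-executable at $s'$, and that $\M,t\vDash q$ whenever $s'\rel{\sigma}t$.

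First I would record the key bridging fact: because $\sigma_1$ witnesses $\Khm{p,o,r}$ and $\M,s'\vDash p$, every state $t_1$ with $s'\rel{\sigma_1}t_1$ satisfies $r$, and hence by $\U(r\to o)$ also satisfies $o$. Applying $\Khm{r,o,q}$ at each such $r$-state $t_1$ then yields that $\sigma_2$ is strongly $o$-executable at $t_1$ and that every $t$ with $t_1\rel{\sigma_2}t$ satisfies $q$. The endpoint requirement follows immediately: any $t$ with $s'\rel{\sigma}t$ factors as $s'\rel{\sigma_1}t_1\rel{\sigma_2}t$, so $t_1\vDash r$ and therefore $t\vDash q$.

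For strong executability together with the intermediate $o$-constraint I would split an initial segment $\sigma_k$ of $\sigma$ according to whether $k$ falls inside $\sigma_1$ or extends into $\sigma_2$. For $k$ within $\sigma_1$, strong $o$-executability of $\sigma_1$ at $s'$ directly supplies both the required successor and, for $0<k<|\sigma_1|$, the $o$-label of the reached state. For $k$ reaching into $\sigma_2$, any state reached by $\sigma_k$ factors through some $\sigma_1$-successor $t_1$ of $s'$, which is an $r$-state, so strong $o$-executability of $\sigma_2$ at $t_1$ delivers both the successor and the $o$-label.

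The one genuinely delicate point — and the place where the hypothesis $\U(r\to o)$ is indispensable — is the junction state reached by exactly $\sigma_1$, i.e. the case $k=|\sigma_1|$ when both plans are nonempty: this state becomes a proper intermediate state of the combined plan and so must satisfy $o$, yet it is only guaranteed to satisfy $r$, and $\U(r\to o)$ is precisely what bridges the gap. I would close by separately noting the degenerate cases $\sigma_1=\epsilon$ (which forces every $p$-state to be an $r$-state, so that $\sigma=\sigma_2$ already works) and $\sigma_2=\epsilon$ (which forces every $r$-state to be a $q$-state, so that the junction becomes an endpoint and needs no $o$-label), confirming that no intermediate obligation is overlooked at the boundary.
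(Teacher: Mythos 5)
Your proof is correct and follows essentially the same route as the paper's: concatenate the two witness plans, use $\U(r\to o)$ to ensure the junction states (which are only guaranteed to be $r$-states) satisfy the intermediate constraint $o$, and factor every $\sigma_1\sigma_2$-path through such a junction state to get $q$ at the endpoints. Your explicit treatment of the degenerate cases where one of the witnesses is empty is a minor addition in care, not a difference in method.
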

\begin{proof}
Assuming $\M,s\vDash \Khm{p,o,r}\land\Khm{r,o,q}\land\U(r\to o)$, we will show that $\M,s\vDash \Khm{p,o,q}$.
Since $\M,s\vDash \Khm{p,o,r}$, it follows that there exists $\sigma\in\Act^*$ such that for each $\M,u\vDash p$, $\sigma$ is strongly $o$-executable at $u$ and that $\M,v\vDash r$ for each $v$ with $u\rel{\sigma}v$. Since $\M,s\vDash \Khm{r,o,q}$, it follows that there exists $\sigma'\in\Act^*$ such that for each $\M,v'\vDash r$, $\sigma'$ is strongly $o$-executable at $v'$ and that $\M,t\vDash q$ for each $t$ with $v'\rel{\sigma}t$. In order to show $\M,s\vDash \Khm{p,o,q}$, we only need to show that $\sigma\sigma'$ is strongly $o$-executable at $u$ and that $\M,t'\vDash q$ for each $t'$ with  $u\rel{\sigma\sigma'}t'$, where $u$ is a state with $\M,u\vDash p$.

By assumption, we know that $\sigma$ is strongly $o$-executable at $u$, and for each $v$ with $u\rel{\sigma}v$, it follows by assumption that $\M,v\vDash r$ and $\sigma'$ is strongly $o$-executable at $v$. Moreover, since $\M,s\vDash\U(r\to o) $, it follows that $\M,v\vDash o$ for each $v$ with $u\rel{\sigma}v$. Thus, $\sigma\sigma'$ is strongly $o$-executable at $u$. What is more, for each $t'$ with $u\rel{\sigma\sigma'}t'$, there is $v$ such that $u\rel{\sigma}v\rel{\sigma'}t'$ and $\M,v\vDash r$, it follows by assumption that $\M,t'\vDash q$. Therefore, we have $\M,s\vDash \Khm{p,o,q}$. 
\end{proof}
\begin{proposition}
$\vDash \Khm{p,o,q}\land\neg\Khm{p,\bot,q}\to \Khm{p,\bot,o}$
\end{proposition}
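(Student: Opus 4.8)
The plan is to prove the implication directly: assume $\M,s\vDash\Khm{p,o,q}$ and $\M,s\nvDash\Khm{p,\bot,q}$, and construct a witness plan for $\Khm{p,\bot,o}$. From the first conjunct I obtain a witness $\sigma=a_1\cdots a_n\in\Act^*$ that is strongly $o$-executable from every $p$-state and drives each such state only into $q$-states. The key observation is that the failure of $\Khm{p,\bot,q}$ forces this $\sigma$ to have length at least $2$.

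To establish the length bound I would rule out the two short cases. If $\sigma=\epsilon$, then every $p$-state $s'$ satisfies $s'\rel{\epsilon}s'$ and is therefore itself a $q$-state, so $\epsilon$ would already witness $\Khm{p,\bot,q}$ (recall that $\epsilon$ is strongly $\chi$-executable everywhere), contradicting the second assumption. If $|\sigma|=1$, then $\sigma$ is a one-step plan with no strictly intermediate states, hence trivially strongly $\bot$-executable from every $p$-state while still reaching only $q$-states; thus $\sigma$ witnesses $\Khm{p,\bot,q}$, again a contradiction. Hence $n\ge 2$. (Note that $\neg\Khm{p,\bot,q}$ also guarantees the existence of at least one $p$-state, since otherwise the formula would hold vacuously.)

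Finally I would take $a_1$, the first action of $\sigma$, as the witness for $\Khm{p,\bot,o}$. Fix any $p$-state $s'$. Strong executability of $\sigma$ at $s'$ (the clause for $k=0$) gives that $s'$ has an $a_1$-successor, and since $a_1$ is a one-step plan it is automatically strongly $\bot$-executable at $s'$. For the goal condition, let $t$ be any state with $s'\rel{a_1}t$, i.e. $s'\rel{\sigma_1}t$. Because $n\ge 2$ we have $0<1<n$, so the intermediate-constraint clause in the definition of strong $o$-executability yields $\M,t\vDash o$. Thus $a_1$ strongly $\bot$-executably sends every $p$-state into $o$-states, which is exactly $\M,s\vDash\Khm{p,\bot,o}$.

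The only real subtlety, and the step I expect to be the crux, is the length argument: one must notice that $\neg\Khm{p,\bot,q}$ is precisely what prevents $\sigma$ from being too short, so that the index $k=1$ falls strictly inside the range $0<k<n$ and the intermediate $o$-constraint can be invoked on the very first transition. Everything else is a direct unfolding of the definition of strong $\chi$-executability.
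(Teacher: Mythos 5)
Your proof is correct and follows essentially the same route as the paper's: use $\neg\Khm{p,\bot,q}$ to force the witness $\sigma$ for $\Khm{p,o,q}$ to have length at least $2$, then take its first action $a_1$ as the witness for $\Khm{p,\bot,o}$, reading off $o$ at the $a_1$-successors from the intermediate clause of strong $o$-executability. The extra remarks (ruling out the two short cases separately, noting that a $p$-state must exist) are just more explicit versions of what the paper does in one line.
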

\begin{proof}
Assuming $\M,s\vDash \Khm{p,o,q}\land\neg\Khm{p,\bot,q}$, we will show that $\M,s\vDash\Khm{p,\bot,o}$. 
Since $\M,s\vDash\Khm{p,o,q}$, it follows that there exists $\sigma\in\Act^*$ such that for each $\M,u\vDash p$, $\sigma$ is strongly $o$-executable at $u$ and $\M,v\vDash q$ for all $v$ with $u\rel{\sigma}v$. If $\sigma\in\Act\cup\{\epsilon\}$, it follows that $\M,s\vDash\Khm{p,\bot,q}$. Since $\M,s\vDash\neg\Khm{p,\bot,q}$, it follows that $\sigma\not\in\Act\cup\{\epsilon\}$. Thus, $\sigma=a_1\cdots a_n$ where $n\geq 2$. Let $u$ be a state such that $\M,u\vDash p$. Since $\sigma=a_1\cdots a_n$ is strongly $o$-executable at $u$, it follows that $a_1$ is executable at $u$. Moreover, since $n\geq 2$, we have $\M,v\vDash o$ for each $v$ with $u\rel{a_1}v$. Therefore, we have $\M,s\vDash\Khm{p,\bot,o}$.
\end{proof}


\begin{proposition}\label{prop:validREPKhm}
$\vDash \U(p'\to p)\land\U(o\to o')\land\U(q\to q')\land\Khm{p,o,q}\to \Khm{p',o',q'}$
\end{proposition}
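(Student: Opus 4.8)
The plan is to show that the very same plan witnessing $\Khm{p,o,q}$ also witnesses $\Khm{p',o',q'}$, so that the whole argument reduces to a three-fold monotonicity check. First I would assume $\M,s\vDash \U(p'\to p)\land\U(o\to o')\land\U(q\to q')\land\Khm{p,o,q}$ and, unfolding the semantics of the last conjunct, extract a sequence $\sigma\in\Act^*$ such that for every $u$ with $\M,u\vDash p$, the plan $\sigma$ is strongly $o$-executable at $u$ and $\M,v\vDash q$ for all $v$ with $u\rel{\sigma}v$. The claim to be verified is then that this same $\sigma$ satisfies the corresponding conditions for $p'$, $o'$, $q'$.

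The key steps proceed by fixing an arbitrary $u'$ with $\M,u'\vDash p'$ and transporting each of the three requirements across the relevant $\U$-assumption. From $\U(p'\to p)$, read universally as $\U$ is a universal modality (as verified above), I obtain $\M,u'\vDash p$, so the witness property applies at $u'$: namely $\sigma$ is strongly $o$-executable at $u'$ and every $v$ reachable by $\sigma$ from $u'$ satisfies $q$. For the postcondition, $\U(q\to q')$ immediately upgrades $\M,v\vDash q$ to $\M,v\vDash q'$ for each such $v$. For the intermediate constraint, I would recall that strong $o$-executability decomposes into (i) plain strong executability, which makes no mention of the constraint formula, and (ii) the requirement that each intermediate state reached by a proper initial segment of $\sigma$ satisfies $o$; clause (i) carries over verbatim, while $\U(o\to o')$ turns each intermediate $o$ into $o'$, yielding strong $o'$-executability at $u'$. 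Since $u'$ was arbitrary among the $p'$-states, $\sigma$ witnesses $\Khm{p',o',q'}$, hence $\M,s\vDash\Khm{p',o',q'}$.

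I expect no serious obstacle, as the statement is pure monotonicity; the only point needing care is the handling of strong $\chi$-executability, where one must notice that the bare executability clause is independent of the constraint formula and so transfers for free, and that only the intermediate-state clause actually consumes the $\U(o\to o')$ hypothesis. One should also double-check the directions of the three implications, since they do not all point the same way: the precondition \emph{strengthens} ($p'$-states are included among the $p$-states), whereas the constraint and the postcondition \emph{weaken} ($o$-states and $q$-states are included among the $o'$- and $q'$-states). This asymmetry is precisely what makes reusing the single witness $\sigma$ sound.
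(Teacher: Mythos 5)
Your proof is correct and follows essentially the same route as the paper's: reuse the single witness $\sigma$, transport the precondition via $\U(p'\to p)$, and upgrade the intermediate and goal conditions via $\U(o\to o')$ and $\U(q\to q')$ respectively. Your explicit remark that the bare strong-executability clause is independent of the constraint formula is a slightly more detailed justification of a step the paper treats as immediate, but the argument is the same.
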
 
\begin{proof}
Assuming $\M,s\vDash \U(p'\to p)\land\U(o\to o')\land\U(q\to q')\land\Khm{p,o,q}$, we will show $\M,s\vDash\Khm{p',o',q'}$. Since $\M,s\vDash\Khm{p,o,q}$, it follows that there exists $\sigma\in\Act^*$ such that for each $\M,u\vDash p$: $\sigma$ is strongly $o$-executable at $u$ and $\M,v\vDash q$ for each $v$ with $u\rel{\sigma}v$. Let $s'$ be a state with $\M,s'\vDash p'$. Next we will show that $\sigma$ is strongly $o'$-executable at $s'$ and $\M,v'\vDash q'$ for all $v'$ with $s'\rel{\sigma}v'$.

Since $\M,s\vDash\U(p'\to p)$, it follows that $\M,s'\vDash p$. Thus, $\sigma$ is strongly $o$-executable at $s'$ and $\M,v'\vDash q$ for each $v'$ with $s'\rel{\sigma} v'$. Since $\M,s\vDash \U(o\to o')$, it follows that $\sigma$ is strongly $o'$-executable at $s'$. Since $\M,s\vDash\U(q\to q')$, it follows that $\M,v'\vDash q'$ for each $v'$ with $s'\rel{\sigma} v'$.
\end{proof}
    \begin{definition}[Deductive System $\SKhm$]
		The axioms and rules shown in Table~\ref{tab:SysSKhm} constitutes the proof system $\SKhm$. 
	\end{definition}
Note that $\DISTU,\NECU, \AxTrU$ are standard for the universal modality $\U$. $\AxTransKU$ and $\AxTransKU$ are introspection axioms reflecting that $\Khmm$ formulas are global. $\EMPKhm$ captures the interaction between $\U$ and $\Khmm$ via empty plan. $\COMP$ is the new composition axiom for $\Khmm$.  $\AxRepKhm$ shows how we can weaken the knowing how claims. $\KHDisjunction$ is the characteristic axiom for $\SKhm$ compared to the system for binary $\mathcal{K}h$, and it  expresses the condition for the necessity of the intermediate steps.  
\begin{table}[htbp]	
\centering
			\begin{tabular}{|lc|}
            \hline
				\multicolumn{2}{|l|}{\textbf{Axioms}}\\[1ex] 
				\TAUT & \text{all tautologies of propositional logic}\\[1ex] 
				\DISTU & $\U p\land\U (p\to q)\to \U q$\\[1ex] 
				\AxTrU& $\U p\to p $\\[1ex] 
				\AxTransKU& $\Khm{p, o,q}\to\U\Khm{p, o,q}$\\[1ex] 
				\AxEucKU& $\neg \Khm{p, o,q}\to\U\neg\Khm{p, o,q}$\\[1ex] 
				\EMPKhm &$\U(p \to q)\to \Khm{p,\bot,q}$ \\[1ex]
				\COMP & $\Khm{p,o,r}\land\Khm{r,o,q}\land\U(r\to o)\to \Khm{p,o,q}$\\[1ex] 
				\KHDisjunction& $\Khm{p,o,q}\land\neg\Khm{p,\bot,q}\to \Khm{p,\bot,o}$\\[1ex]
				\AxRepKhm & $\U(p'\to p)\land\U(o\to o')\land\U(q\to q')\land\Khm{p,o,q}\to \Khm{p',o',q'}$ \\[1ex]
				\multicolumn{2}{|l|}{\textbf{Rules}}\\[1ex]
				\multicolumn{2}{|c|}{
					\begin{tabular}{cccccccc}
						\MP & $\dfrac{\varphi,\varphi\to\psi}{\psi}$ &~~~~~~~~ &\NECU &$\dfrac{\varphi}{\U\varphi}$&~~~~~~~~ &\SUB &$\dfrac{\varphi(p)}{\varphi[\psi\slash p]}$
					\end{tabular}	
				}\\
                \hline
			\end{tabular}            
			\caption{System $\SKhm$}\label{tab:SysSKhm}
\end{table}
\begin{remark}
Note that the corresponding axioms for $\COMP$, $\EMPKhm$ and $\AxRepKhm$ in the setting of binary $\mathcal{K}h$ are the following:\footnote{We can obtain the corresponding axioms by taking the intermediate constraint as $\top$. Note that in \cite{Wang15lori}, we use the name \texttt{WKKh} for $\WSKh$.}
\begin{center}
			\begin{tabular}{|c|c|}
            \hline
           \COMPKh & $\Kh{p,q}\land\Kh{q,r}\to \Kh{p,r}$\\
           \EMP & $\U(p\to q)\to \Kh{p,q}$\\
        \WSKh & $\U(p'\to p) \land \U(q\to q')\land \Kh{p, q}\to \Kh{p', q'}$\\
		    \hline
             \end{tabular}
\end{center}
In the system $\SKh$ of \cite{Wang15lori} \WSKh\ can be derived using $\COMPKh$ and $\EMP$. However, $\AxRepKhm$ cannot be derived using $\COMP$ and $\EMPKhm$. In particular, $\Khm{p',\bot, p}\land\Khm{p,o,q}\to \Khm{p',o,q}$ is not valid due to the lack of $\U(p\to o)$, in contrast with the $\SKh$-derivable $\Kh{p', p}\land \Kh{p, q}\to \Kh{p', q}$ which is  crucial in the derivation of \WSKh\ in $\SKh$. 
\end{remark}

Since $\U$ is an universal modality, \DISTU\ and \AxTrU\ are obviously valid. Due to the fact that the modality $\Khmm$ is not local, it is easy to show that \AxTransKU\ and \AxEucKU\ are valid. Moreover, by Propositions \ref{prop:validEMPKhm}--\ref{prop:validREPKhm}, we have that all axioms are valid. Due to a standard argument in modal logic, we know that the rules \MP, \NECU\ and \SUB\ preserve formula's validity. The soundness of $\SKhm$ follows immediately.
\begin{theorem}
	$\SKhm$ is sound w.r.t. the class of all models. 
\end{theorem}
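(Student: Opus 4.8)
The plan is to argue soundness by the standard induction on the length of $\SKhm$-derivations: it suffices to check that every axiom is valid and that every rule preserves validity, after which any theorem, being the final line of a finite derivation, is valid. A convenient preliminary observation is that the truth value of every $\LKhm$-formula is independent of the designated state (as already noted after the semantics, whether $\Khm{\psi,\chi,\phi}$ holds depends only on $\M$, and propositional connectives preserve this at the outermost $\Khmm$/$\U$ level); this lets validity and global truth be used interchangeably in the bookkeeping.

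First I would dispatch the axioms. \TAUT\ is immediate, since propositional tautologies hold under any assignment of truth values to their (possibly compound) atoms. For \DISTU\ and \AxTrU\ I would simply invoke the semantic clause for $\U$ derived above, namely that $\M,s\vDash\U\varphi$ iff $\varphi$ holds at every state: this makes $\U$ a genuine universal modality, so these are the familiar \textbf{K} and \textbf{T} principles and are valid. The four substantive axioms \EMPKhm, \COMP, \KHDisjunction, and \AxRepKhm\ are exactly the content of Propositions \ref{prop:validEMPKhm} through \ref{prop:validREPKhm}, so their validity is already established.

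The two introspection axioms \AxTransKU\ and \AxEucKU\ I would justify from the non-locality of $\Khmm$. Because the existence of a uniform witness plan $\sigma$ for $\Khm{p,o,q}$ is a property of the model alone, $\M,s\vDash\Khm{p,o,q}$ entails that $\Khm{p,o,q}$ in fact holds at every state of $\M$, hence $\M,s\vDash\U\Khm{p,o,q}$; the negative case for \AxEucKU\ is dual. For the rules, \MP\ preserves validity trivially; \NECU\ preserves it because a valid $\varphi$ is true at every state, which is precisely the assertion $\U\varphi$. For \SUB\ I would invoke a routine substitution lemma to the effect that uniformly replacing $p$ by $\psi$ amounts to re-evaluating the original formula under a modified valuation, so that a formula valid on all models (hence under all valuations) remains valid after substitution.

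The main obstacle is, in truth, minimal: the computational heart of the argument already resides in the cited Propositions, and the remaining cases are the textbook modal-logic patterns. The only points demanding mild care are articulating the state-independence of $\Khmm$ cleanly enough to ground the introspection axioms, and stating the substitution lemma for \SUB\ precisely; both are standard and do not interact with the distinctive ternary structure of the operator.
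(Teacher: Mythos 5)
Your proof is correct and follows essentially the same route as the paper: validity of the four substantive axioms is delegated to Propositions \ref{prop:validEMPKhm}--\ref{prop:validREPKhm}, the $\U$-axioms and introspection axioms follow from $\U$ being a universal modality and $\Khmm$ being state-independent, and the rules are handled by the standard argument. The only nitpick is your opening claim that \emph{every} $\LKhm$-formula is state-independent (false for atoms such as $p$); but your parenthetical and the actual use for \AxTransKU\ and \AxEucKU\ rely only on the state-independence of $\Khmm$-formulas, which is correct.
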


Below we derive some theorems and rules that are useful in the later proofs.
\begin{proposition}
We can derive the following in $\SKhm$:
\begin{center}
			\begin{tabular}{|c|c|}
				\hline
				\AxTransU&$\U p\to\U\U p$\\
				\AxEucU&$\neg\U p\to \U\neg\U p$\\
                \ThKhmL& $\U(p'\to p)\land\Khm{p,o,q}\to \Khm{p',o,q}$\\
                \ThKhmM& $\U(o\to o')\land\Khm{p,o,q}\to \Khm{p,o',q}$\\  
             \ThKhmR& $\U(q\to q')\land\Khm{p,o,q'}\to \Khm{p,o,q'}$\\
                             \Univers &   $\U\neg p\to\Khm{p,\bot,\bot}$\\
\REU & from $\phi\lra \psi$ prove $\U \phi\lra \U\psi$  \\
\RE & from $\phi\lra \psi$ prove $\chi\lra \chi'$ \\& where $\chi'$ is obtained by replacing some occurrences of $\phi$ in $\chi$ by $\psi$. \\

				\hline
			\end{tabular}
		\end{center}
\end{proposition}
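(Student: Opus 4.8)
The plan is to dispatch the eight items roughly in order of difficulty, keeping the replacement rule \RE\ for last since its proof consumes the others. The two $\U$-introspection theorems come essentially for free from the definition $\U\varphi:=\Khm{\neg\varphi,\top,\bot}$. Instantiating the schematic letters of \AxTransKU\ via \SUB\ with first argument $\neg p$, second argument $\top$ and third argument $\bot$ yields $\Khm{\neg p,\top,\bot}\to\U\Khm{\neg p,\top,\bot}$, which is literally \AxTransU, i.e.\ $\U p\to\U\U p$, once the abbreviation is unfolded. The same substitution applied to \AxEucKU\ gives \AxEucU.

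Next I would read off the three monotonicity principles \ThKhmL, \ThKhmM, \ThKhmR\ as instances of \AxRepKhm. For \ThKhmL, take $o'=o$ and $q'=q$ in \AxRepKhm; the premises $\U(o\to o')$ and $\U(q\to q')$ then become $\U(o\to o)$ and $\U(q\to q)$, which are theorems obtained by applying \NECU\ to the propositional tautologies $o\to o$ and $q\to q$, so they are discharged and only $\U(p'\to p)\land\Khm{p,o,q}\to\Khm{p',o,q}$ remains. \ThKhmM\ and \ThKhmR\ are entirely symmetric, fixing instead the first/third and first/second argument positions. For \Univers\ I would instantiate \EMPKhm\ at $q:=\bot$ to get $\U(p\to\bot)\to\Khm{p,\bot,\bot}$; since $\neg p\lra(p\to\bot)$ is a tautology, \NECU\ followed by \DISTU\ gives $\U\neg p\to\U(p\to\bot)$, and chaining the two implications yields $\U\neg p\to\Khm{p,\bot,\bot}$. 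The rule \REU\ is the routine congruence for a normal modality: from $\phi\lra\psi$ extract $\phi\to\psi$ and $\psi\to\phi$ propositionally, apply \NECU\ to each, and use \DISTU\ to obtain $\U\phi\to\U\psi$ and $\U\psi\to\U\phi$.

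The real work is \RE, which I would prove by induction on the structure of $\chi$. The atomic case and the Boolean cases ($\neg$ and $\land$) are standard congruence arguments using the induction hypothesis and propositional reasoning. The crucial case is $\chi=\Khm{\alpha,\beta,\gamma}$ with $\chi'=\Khm{\alpha',\beta',\gamma'}$, where the primed formulas arise by replacing some occurrences of $\phi$ inside $\alpha,\beta,\gamma$: the induction hypothesis supplies $\alpha\lra\alpha'$, $\beta\lra\beta'$, $\gamma\lra\gamma'$, and from each equivalence \NECU\ (or \REU) produces both directions of the universally prefixed implications, namely $\U(\alpha\to\alpha')$, $\U(\alpha'\to\alpha)$, and likewise for $\beta,\gamma$. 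Feeding these into \AxRepKhm\ twice gives $\Khm{\alpha,\beta,\gamma}\lra\Khm{\alpha',\beta',\gamma'}$.

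The one point I expect to be the main obstacle, and the reason all the earlier items are needed, is getting the directions right in this last step: \AxRepKhm\ \emph{weakens} the first argument but \emph{strengthens} the second and third, so the forward implication must be fed $\U(\alpha'\to\alpha)$, $\U(\beta\to\beta')$, $\U(\gamma\to\gamma')$, while the backward implication needs the mirrored triple $\U(\alpha\to\alpha')$, $\U(\beta'\to\beta)$, $\U(\gamma'\to\gamma)$. Because the induction hypothesis delivers full equivalences rather than one-way implications, both triples are available, so the three-argument congruence of $\Khmm$ goes through in both directions; verifying this pairing carefully is the only delicate part of the argument.
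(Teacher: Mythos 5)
Your proposal is correct and follows essentially the same route as the paper's (much terser) proof: the paper likewise reads \AxTransU\ and \AxEucU\ off the introspection axioms, obtains \ThKhmL, \ThKhmM, \ThKhmR\ and the rule \RE\ from \AxRepKhm\ together with \NECU, and derives \Univers\ from \EMPKhm\ via the equivalence $\U\neg p\lra\U(p\to\bot)$ supplied by \REU. Your careful tracking of the contravariant first argument versus the covariant second and third arguments in the induction step for \RE\ is exactly the detail the paper leaves implicit.
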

\begin{proof}\REU\ is immediate given $\DISTU$ and $\NECU$. 
$\AxTransU$ and $\AxEucU$ are special cases of $\AxTransKU$ and $\AxEucKU$ respectively. $\ThKhmL, \ThKhmM, \ThKhmR$ are the special cases of $\AxRepKhm$. To prove \Univers, first note that $\U\neg p\lra \U (p\to\bot)$ due to \REU. Then due to $\EMPKhm$, we have $\U\neg p\to \Khm{p, \bot, \bot}$. $\RE$ can be obtained by using $\AxRepKhm$ and $\NECU$.   

\end{proof}
	\section{Completeness}\label{Sec.Proof}
    This section will prove that $\SKhm$ is complete w.r.t. the class of all models. The key is to build a canonical model based on a fixed maximal consistent set, just as in \cite{Wang15lori}. However, the canonical model here is much more complicated. Firstly, the state of the canonical model is a pair consisting of a maximal consistent set and a marker which will play an important role in defining the witness plan for $\Khmm$-formulas. Secondly, different from the canonical model in \cite{Wang15lori} where each formula of the form $\Kh{\psi,\phi}$ is realized by an one-step witness plan, some $\Khm{\psi,\chi,\phi}$ formulas here have to be realized by a two-step witness plan, and the intermediate states need to satisfy $\chi$.
    
	Here are some notions before we prove the completeness. Given a set of $\LKhm$ formulas $\Delta$, let $\Delta|_{\Khmm}$ and $\Delta|_{\neg\Khmm}$ be the collections of its positive and negative $\Khmm$ formulas:  
	$$\Delta|_{\Khmm}=\{\theta \mid\theta=\Khm{\psi,\chi,\varphi}\in\Delta  \};$$
    $$\Delta|_{\neg\Khmm}=\{\theta \mid\theta=\neg\Khm{\psi,\chi,\varphi}\in\Delta  \}.$$	
	In the following, let $\Gamma$ be a maximal consistent set (MCS) of $\LKhm$ formulas. We first prepare ourselves with some handy propositions. 
	\begin{definition}
		Let $\MCS$ be the set of all MCS $\Delta$ such that $\Delta|_\Khmm=\Gamma|_\Khmm$. 
	\end{definition}
Since every $\Delta\in \MCS$ is maximal consistent it follows immediately that: 
	\begin{proposition}\label{prop:ShareKHow}
		For each $\Delta\in\MCS$, we have $\Khm{\psi,\chi,\phi}\in\Gamma$ if and only if $\Khm{\psi,\chi,\phi}\in\Delta$ for all $\Khm{\psi,\chi,\phi}\in \LKhm$.
	\end{proposition}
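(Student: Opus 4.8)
The plan is to prove the biconditional by directly unfolding the definitions of $\MCS$ and of the restriction operator $\cdot|_{\Khmm}$, so that the statement reduces to a set-membership tautology. First I would recall that, by the definition of $\MCS$, membership $\Delta \in \MCS$ means exactly that $\Delta|_{\Khmm} = \Gamma|_{\Khmm}$, an equality of sets of formulas. Next I would recall that, by definition, $\Gamma|_{\Khmm}$ (respectively $\Delta|_{\Khmm}$) collects precisely those formulas of the syntactic shape $\Khm{\psi,\chi,\varphi}$ that belong to $\Gamma$ (respectively to $\Delta$).

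With these two observations in hand, the core of the argument is the chain of equivalences $\Khm{\psi,\chi,\phi} \in \Gamma \iff \Khm{\psi,\chi,\phi} \in \Gamma|_{\Khmm} \iff \Khm{\psi,\chi,\phi} \in \Delta|_{\Khmm} \iff \Khm{\psi,\chi,\phi} \in \Delta$, valid for every $\Khm{\psi,\chi,\phi} \in \LKhm$. The first and last equivalences hold because $\Khm{\psi,\chi,\phi}$ already has exactly the form matched by the restriction operator, so lying in $\Gamma$ (resp.\ $\Delta$) is the same as lying in its $\Khmm$-restriction; the middle equivalence is just the defining equality $\Gamma|_{\Khmm} = \Delta|_{\Khmm}$. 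Reading the chain from left to right gives the forward direction and from right to left the backward direction, so no separate case split is needed.

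I do not expect any genuine obstacle: once the two definitions are spelled out the claim is immediate, and the only thing to watch is the bookkeeping that $\Khm{\psi,\chi,\phi}$ does match the pattern $\Khm{\psi,\chi,\varphi}$ used to define the restriction. I would further note that maximal consistency is not strictly required for this positive statement; where it pays off is in the companion fact for negated know-how formulas. Since each $\Delta \in \MCS$ is an MCS we have $\neg\Khm{\psi,\chi,\phi} \in \Delta$ iff $\Khm{\psi,\chi,\phi} \notin \Delta$, and combining this with the proposition yields $\neg\Khm{\psi,\chi,\phi} \in \Gamma \iff \neg\Khm{\psi,\chi,\phi} \in \Delta$. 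This negated version, governing $\Delta|_{\neg\Khmm}$, is presumably the form actually reused when building and verifying the canonical model later in the completeness proof.
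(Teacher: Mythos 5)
Your proof is correct and matches the paper, which states this proposition as an immediate consequence of the definition of $\MCS$ (namely $\Delta|_{\Khmm}=\Gamma|_{\Khmm}$) without further argument; your unfolding of the restriction operator is exactly the intended justification. Your side remark that maximal consistency is only needed for the companion fact about negated $\Khmm$-formulas is also accurate.
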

	\begin{proposition}\label{prop:allphiImplyUphi}
		If $\phi\in\Delta$ for all $\Delta\in\MCS$ then $\U\phi\in\Delta$ for all $\Delta\in\MCS$.
	\end{proposition}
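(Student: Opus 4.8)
The plan is to argue by contraposition on the conclusion, reducing everything to a single membership question about $\Gamma$. First I would observe that $\U\phi$ is by definition the $\Khmm$-formula $\Khm{\neg\phi,\top,\bot}$, so by Proposition~\ref{prop:ShareKHow} it lies in every $\Delta\in\MCS$ if and only if it lies in $\Gamma$. Hence it suffices to prove that the hypothesis forces $\U\phi\in\Gamma$. I would establish this in contrapositive form: assuming $\U\phi\notin\Gamma$, so $\neg\U\phi\in\Gamma$ by maximality, I would produce some $\Delta\in\MCS$ with $\phi\notin\Delta$, contradicting the hypothesis.

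To build such a $\Delta$, consider the set $\Sigma=\{\neg\phi\}\cup\Gamma|_{\Khmm}\cup\Gamma|_{\neg\Khmm}$, collecting $\neg\phi$ together with all positive and negative $\Khmm$-formulas of $\Gamma$. If $\Sigma$ is consistent, Lindenbaum's lemma gives an MCS $\Delta\supseteq\Sigma$; since $\Gamma|_{\Khmm}\subseteq\Delta$ and $\Gamma|_{\neg\Khmm}\subseteq\Delta$ force $\Delta$ to share exactly the positive $\Khmm$-formulas of $\Gamma$, we obtain $\Delta|_{\Khmm}=\Gamma|_{\Khmm}$, i.e.\ $\Delta\in\MCS$, while $\neg\phi\in\Delta$ gives $\phi\notin\Delta$, the desired contradiction.

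The crux is therefore to show $\Sigma$ is consistent, and this is where I expect the real work to lie. If $\Sigma$ were inconsistent, finiteness of proofs yields $\alpha_1,\dots,\alpha_m\in\Gamma|_{\Khmm}$ and $\beta_1,\dots,\beta_n\in\Gamma|_{\neg\Khmm}$ with $\vdash\bigwedge_i\alpha_i\land\bigwedge_j\beta_j\to\phi$. Here the introspection axioms do the decisive work: $\AxTransKU$ gives $\U\alpha_i\in\Gamma$ for each positive conjunct, and $\AxEucKU$ gives $\U\beta_j\in\Gamma$ for each negative one, precisely because every $\alpha_i$ and $\beta_j$ already belongs to $\Gamma$. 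Treating $\U$ as the normal modality governed by $\DISTU$ and $\NECU$, I would apply necessitation and distribution to the provable implication to obtain $\vdash\bigwedge_i\U\alpha_i\land\bigwedge_j\U\beta_j\to\U\phi$, whence $\U\phi\in\Gamma$, contradicting $\neg\U\phi\in\Gamma$. The only delicate points are the routine normal-modal-logic derivation that pushes $\U$ across a finite conjunction and an implication, and keeping careful track of the fact that each $\beta_j$ is itself a negated $\Khmm$-formula, so that $\AxEucKU$ rather than $\AxTransKU$ is the applicable instance.
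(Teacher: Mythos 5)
Your proposal is correct and follows essentially the same route as the paper's proof: both hinge on the (in)consistency of $\{\neg\phi\}\cup\Gamma|_{\Khmm}\cup\Gamma|_{\neg\Khmm}$, use finiteness of derivations to extract a provable implication, apply $\AxTransKU$ and $\AxEucKU$ to place the $\U$-prefixed conjuncts in $\Gamma$, push $\U$ through with $\NECU$ and $\DISTU$, and finally transfer $\U\phi$ to every $\Delta\in\MCS$ via Proposition~\ref{prop:ShareKHow}. The only difference is that you arrange the argument contrapositively (assuming $\U\phi\notin\Gamma$) where the paper argues directly, which is an immaterial reorganization.
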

	\begin{proof}
		Suppose $\varphi\in \Delta$ for all $\Delta\in \Phi_\Gamma$, then by the definition of $\Phi_\Gamma$, $\neg\varphi$ is not consistent with $\Gamma|_{\Khmm}\cup\Gamma|_{\neg\Khmm}$, for otherwise $\Gamma|_{\Khmm}\cup\Gamma|_{\neg\Khmm}\cup\{\neg \varphi\}$ can be extended into a maximal consistent set in $\Phi_\Gamma$ due to a standard Lindenbaum-like argument. Thus there are $\Khm{\psi_1,\chi_1, \varphi_1}$, \dots, $\Khm{\psi_k,\chi_k, \varphi_k} \in \Gamma|_{\Khmm}$ and $\neg \Khm{\psi'_1,\chi'_1, \varphi'_1}$, \dots, $\neg\Khm{\psi'_l,\chi'_l, \varphi'_l} \in \Gamma|_{\neg \Khmm}$ such that 
		$$\vdash \bigwedge_{1\leq i\leq k}\Khm{\psi_i,\chi_i, \varphi_i}\land \bigwedge_{1\leq j\leq l}\neg \Khm{\psi'_j,\chi'_j, \varphi'_j}\to \varphi.$$
		By $\NECU$,  $$\vdash \U(\bigwedge_{1\leq i\leq k}\Khm{\psi_i,\chi_i, \varphi_i}\land \bigwedge_{1\leq j\leq l}\neg \Khm{\psi'_j,\chi'_j, \varphi'_j}\to \varphi).$$ By $\DISTU$  we have: $$\vdash\U(\bigwedge_{1\leq i\leq k}\Khm{\psi_i,\chi_i, \varphi_i}\land \bigwedge_{1\leq j\leq l}\neg \Khm{\psi'_j,\chi'_j, \varphi'_j}) \to \U\varphi.$$   Since $\Khm{\psi_1,\chi_1, \varphi_1}$, \dots, $\Khm{\psi_k,\chi_k, \varphi_k} \in \Gamma$, we have $\U\Khm{\psi_1,\chi_1, \varphi_1}$, \dots, $\U\Khm{\psi_k,\chi_k, \varphi_k}\in \Gamma$ due to $\AxTransKU$ and the fact that $\Gamma$ is a maximal consistent set. Similarly, we have $\U\neg \Khm{\psi'_1,\chi'_1, \varphi'_1}$, \dots, $\U\neg\Khm{\psi'_l,\chi'_l, \varphi'_l} \in \Gamma$ due to $\AxEucKU$. By \DISTU\ and \NECU, it is easy to show that $\vdash \U(p\land q)\lra\U p\land \U q$. Then due to a slight generalization, we have: $$\U(\bigwedge_{1\leq i\leq k}\Khm{\psi_i,\chi_i, \varphi_i}\land \bigwedge_{1\leq j\leq l}\neg \Khm{\psi'_j,\chi'_j, \varphi'_j}) \in \Gamma.$$  Now it is immediate that $\U\varphi\in\Gamma$. Due to Proposition~\ref{prop:ShareKHow}, $\U\varphi\in\Delta$ for all $\Delta\in \Phi_\Gamma.$
	\end{proof}
	
	\begin{proposition}\label{prop:OneStepExistence}
		Given $\Khm{\psi,\top,\phi}\in \Gamma$ and $\Delta\in\Phi_\Gamma$, if $\psi\in\Delta$ then there exists $\Delta'\in\Phi_\Gamma$ such that $\phi\in\Delta'$.
	\end{proposition}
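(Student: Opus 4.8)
The plan is to argue by contradiction, reducing the existence claim to a consistency fact through the machinery already established in Proposition~\ref{prop:allphiImplyUphi}. Suppose that no $\Delta'\in\MCS$ contains $\phi$; since each such $\Delta'$ is maximal consistent, this means $\neg\phi\in\Delta'$ for all $\Delta'\in\MCS$. Applying Proposition~\ref{prop:allphiImplyUphi} to $\neg\phi$ then yields $\U\neg\phi\in\Delta'$ for all $\Delta'\in\MCS$, and in particular $\U\neg\phi\in\Gamma$.

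The next step is to convert this universal negative information about $\phi$ into a know-how statement that can be composed with the hypothesis $\Khm{\psi,\top,\phi}\in\Gamma$. First, $\Univers$ gives $\U\neg\phi\to\Khm{\phi,\bot,\bot}$, so $\Khm{\phi,\bot,\bot}\in\Gamma$. Since $\vdash\U(\bot\to\top)$ (by $\NECU$ on a tautology), $\ThKhmM$ weakens the intermediate constraint from $\bot$ to $\top$, giving $\Khm{\phi,\top,\bot}\in\Gamma$. Now I apply the composition axiom $\COMP$ with the instantiation $p:=\psi$, $o:=\top$, $r:=\phi$, $q:=\bot$: using $\Khm{\psi,\top,\phi}\in\Gamma$, the freshly obtained $\Khm{\phi,\top,\bot}\in\Gamma$, and the derivable side condition $\U(\phi\to\top)$, I conclude $\Khm{\psi,\top,\bot}\in\Gamma$.

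Finally, observe that $\Khm{\psi,\top,\bot}$ is, modulo the provable equivalence $\neg\neg\psi\lra\psi$ and the rule $\RE$, exactly the defined formula $\U\neg\psi$. Since $\Khm{\psi,\top,\bot}\in\Gamma$ is a $\Khmm$-formula, it lies in $\Gamma|_\Khmm$, so by Proposition~\ref{prop:ShareKHow} it belongs to every member of $\MCS$, and in particular to the given $\Delta$. Applying $\AxTrU$ then yields $\neg\psi\in\Delta$, contradicting the assumption $\psi\in\Delta$. This contradiction shows that some $\Delta'\in\MCS$ must contain $\phi$, as required.

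I expect the only genuine subtlety to be the bookkeeping around the intermediate guard: the composition axiom forces all three know-how formulas to share the same middle constraint, so the crux is recognizing that $\Univers$ only delivers the $\bot$-guarded $\Khm{\phi,\bot,\bot}$ and that $\ThKhmM$ is precisely the tool needed to align it with the $\top$-guard of the hypothesis before composing. Everything else is routine propositional reasoning together with the global character of $\Khmm$-formulas encapsulated in Propositions~\ref{prop:ShareKHow} and~\ref{prop:allphiImplyUphi}.
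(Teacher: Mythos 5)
Your proof is correct and follows essentially the same route as the paper: assume no $\Delta'$ contains $\phi$, derive $\U\neg\phi\in\Gamma$ via Proposition~\ref{prop:allphiImplyUphi}, convert to $\Khm{\phi,\top,\bot}$, compose with $\Khm{\psi,\top,\phi}$ using \COMP\ to get $\U\neg\psi\in\Gamma$, and contradict $\psi\in\Delta$ via Proposition~\ref{prop:ShareKHow} and \AxTrU. The only cosmetic difference is that you reach $\Khm{\phi,\top,\bot}$ through \Univers\ and \ThKhmM\ where the paper treats it as definitionally equal to $\U\neg\phi$ (modulo \RE); your version is, if anything, slightly more explicit about the bookkeeping.
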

	\begin{proof}
Assuming $\Khm{\psi,\top,\phi}\in \Gamma$ and $ \psi\in\Delta\in \Phi_\Gamma$, if there does not exist $\Delta'\in\Phi_\Gamma$ such that $\phi\in\Delta'$, it means that $\neg\phi\in\Delta'$ for all $\Delta'\in\Phi_\Gamma$. It follows by Proposition \ref{prop:allphiImplyUphi} that $\U\neg\phi\in\Gamma$, namely $\Khm{\phi,\top,\bot}\in\Gamma$. Since $\U(\phi\to \bot)$ and $\Khm{\psi,\top,\phi}\in\Gamma$, it follows by \COMP\ that $\Khm{\psi,\top,\bot}\in\Gamma$ namely, $\U\neg\psi\in\Gamma$. By Proposition \ref{prop:ShareKHow}, we have that $\U\neg\psi\in\Delta$. It follows by \AxTrU\ that $\neg\psi\in\Delta$. This is contradictory with $\psi\in\Delta$. Therefore, there exists $\Delta'\in\Phi_\Gamma$ such that $\phi\in\Delta'$.
	\end{proof}
	
	\begin{definition} Let the set of action symbols $\Act_\Gamma$ be defined as  $\Act_\Gamma=\{\lr{\psi,\bot,\phi}\mid \Khm{\psi,\bot,\phi}\in\Gamma \}\cup\{\lr{\chi^\psi,\phi}\mid \Khm{\psi,\chi,\phi},\neg\Khm{\psi,\bot,\phi}\in\Gamma \}$. 
	\end{definition}
The later part of $\Act_\Gamma$ is to handle the cases where the intermediate state is indeed necessary: $\neg\Khm{\psi,\bot,\phi}$ makes sure that you cannot have a plan to guarantee $\phi$ in less than two steps.

\medskip
   
In the following we build a separate canonical model for each MCS $\Gamma$, for it is not possible to satisfy all of $\Khmm$ formulas simultaneously in a single model since they are global. Because the later proofs are quite technical, it is very important to first understand the ideas behind the canonical model construction. Note that to satisfy a $\Khm{\psi, \chi, \phi}$ formula, there are two cases to be considered: 

(1) $\Khm{\psi, \bot, \phi}$ holds and we just need an one-step witness plan, which can be handled similarly using the techniques developed in \cite{Wang15lori};

(2) $\Khm{\psi, \bot, \phi}$ does not hold, and we need to have a witness plan which at least involves an intermediate $\chi$-stage. By \KHDisjunction, $\Khm{\psi, \bot, \chi}$ holds. It is then tempting to reduce  $\Khm{\psi, \chi, \phi}$ to $\Khm{\psi, \bot, \chi}\land \Khm{\chi, \chi, \phi}$. However, it is not correct since we may not have a strongly $\chi$-executable plan to make sure $\phi$ from \textit{any} $\chi$-state. Note that $\Khm{\psi, \chi, \phi}$ and $\Khm{\psi, \bot, \chi}$ only make sure we can start from \textit{certain} $\chi$-states that result from the witness plan for $\Khm{\psi, \bot, \chi}$. However, we cannot refer to such $\chi$-states in the language of $\LKhm$. This is why we include $\chi^\psi$ markers in the building blocks of the canonical model besides maximal consistent set. $\chi^\psi$ roughly tells us where does this state ``comes from''. \footnote{In \cite{Wang15lori}, the canonical models are much simpler: we just need MCSs and the canonical relations are simply labeled by $\lr{\psi, \phi}$ for $\Kh{\psi,\phi}\in \Gamma$. }


	\begin{definition}[Canonical Model]
		The canonical model for $\Gamma$ is a tuple $\calM^c_{\Gamma}=\lr{\calS^c,\calR^c,\calV^c}$ where:
		\begin{itemize}
			\item $\calS^c=\{(\Delta,\chi^\psi)\mid \chi\in\Delta\in\MCS $, and $\lr{\chi^\psi,\phi}\in\Act_\Gamma$ for some $\phi$ or $\lr{\psi,\bot,\chi}\in\Act_\Gamma \}$. We write the pair in $\calS$ as $w,v,\cdots$, and refer to the first entry of $w\in\calS$ as $\Lef{w}$, to the second entry as $\Rig{w}$;
			\item $w\rel{\lr{\psi,\bot,\phi}}_c w'$ iff $\psi\in\Lef{w}$ and $\Rig{w'}=\phi^\psi$;
			\item $w\rel{\lr{\chi^\psi,\phi}}_c w'$ iff $\Rig{w}=\chi^\psi$ and $\phi\in\Lef{w'}$;
			\item $p\in\calV^c (w)$ iff $p\in\Lef{w}$.
\end{itemize}
For each $w\in\calS$, we also call $w$ a $\psi$-state if $\psi\in\Lef{w}$.
\end{definition}

In the above definition, $\Rig{w}$ marks the use of $w$ as an intermediate state.  The same maximal consistent set $\Delta$ may have different uses depending on different $\Rig{w}$. We will make use of the transitions $w\rel{\lr{\psi,\bot,\chi}}_c v\rel{\lr{\chi^\psi,\phi}}_c w'$ where $\Rig{v}=\chi^\psi$. Note that if $\Rig{w}=\chi^\psi$ then $w\rel{\lr{\chi^\psi,\phi}}_cv$ for each $\phi$-state $v$. The highly non-trivial  part of the later proof of the truth lemma is to show adding such transitions and making them to be composed arbitrarily will not cause some $\Khm{\psi,\chi,\phi}\not\in \Lef{w}$ to hold at $w$.  

\medskip

We first show that each $\Delta\in \Phi_\Gamma$ appears as $\Lef{w}$ for some $w\in \calS^c$.
	\begin{proposition}\label{prop:allMCSinS}
		For each $\Delta\in\MCS$, there exists $w\in\calS^c$ such that $\Lef{w}=\Delta$.
	\end{proposition}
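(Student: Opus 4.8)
The plan is to exhibit, uniformly for every $\Delta\in\MCS$, a single marker that turns $\Delta$ into a legitimate state of the canonical model. The natural candidate is the trivial marker $\top^\top$: I would set $w=(\Delta,\top^\top)$ and verify that $w\in\calS^c$, after which $\Lef{w}=\Delta$ holds by construction and we are done.

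To check $w\in\calS^c$, recall that the membership condition for $\calS^c$ at a pair $(\Delta,\chi^\psi)$ requires (i) $\chi\in\Delta$, and (ii) either $\lr{\chi^\psi,\phi}\in\Act_\Gamma$ for some $\phi$, or $\lr{\psi,\bot,\chi}\in\Act_\Gamma$. Instantiating $\chi=\psi=\top$, condition (i) becomes $\top\in\Delta$, which is immediate since $\top$ lies in every MCS. For condition (ii) I would establish the second disjunct, $\lr{\top,\bot,\top}\in\Act_\Gamma$, which by the definition of $\Act_\Gamma$ amounts to showing $\Khm{\top,\bot,\top}\in\Gamma$.

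The only real content is this last membership, and it follows directly from the axioms. Since $\top\to\top$ is a propositional tautology, $\U(\top\to\top)$ is derivable by \NECU; then the instance $\U(\top\to\top)\to\Khm{\top,\bot,\top}$ of \EMPKhm\ (taking $p=q=\top$) yields $\Khm{\top,\bot,\top}$ by \MP. As $\Gamma$ is deductively closed, $\Khm{\top,\bot,\top}\in\Gamma$, so $\lr{\top,\bot,\top}\in\Act_\Gamma$ and $w=(\Delta,\top^\top)\in\calS^c$.

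I do not expect a genuine obstacle here, as the argument is short. The only point requiring care is to match the membership condition of $\calS^c$ exactly: one should notice that the uniform marker $\top^\top$ satisfies condition (ii) through the \emph{derivable} formula $\Khm{\top,\bot,\top}$, rather than trying to extract a $\Delta$-specific marker from the $\Khmm$-formulas that happen to occur in $\Delta$.
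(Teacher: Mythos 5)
Your proof is correct and follows essentially the same route as the paper: both take the state $(\Delta,\top^\top)$ and justify its membership in $\calS^c$ by deriving $\Khm{\top,\bot,\top}\in\Gamma$ from \TAUT, \NECU, and \EMPKhm, so that $\lr{\top,\bot,\top}\in\Act_\Gamma$. No issues to report.
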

	\begin{proof}
		Since $\vdash\top\to\top$, it follows by \NECU\ that $\vdash\U(\top\to\top)$. Thus, we have $\U(\top\to\top)\in\Gamma$. It follows by \EMPKhm\ that $\Khm{\top,\bot,\top}\in\Gamma$. It follows that $a=\lr{\top,\bot,\top}\in\Act_\Gamma$. Since $\top\in\Delta$, it follows that $(\Delta,\top^\top)\in\calS^c$.
	\end{proof}
	Since $\Gamma\in\Phi_\Gamma$, it follows by Proposition \ref{prop:allMCSinS} that $\calS^c\neq\emptyset$.
\medskip
    
Proposition \ref{prop:allphiImplyUphi} helps us to prove the following two handy propositions which will play crucial roles in the completeness proof. Note that according to Proposition \ref{prop:allphiImplyUphi}, to obtain that $\U\phi$ in all the $\Delta\in \Phi_\Gamma$, we just need to show that $\phi$ is in all the $\Delta\in \Phi_\Gamma$, not necessarily in all the $w\in \calS^c$. 
    
	\begin{proposition}\label{prop:IMPinHead}
		Given $a=\lr{\psi',\bot,\phi'}\in\Act_\Gamma$, If for each $\psi$-state $w\in\calS^c$ we have that $a$ is executable at $w$, then $\U(\psi\to\psi')\in \Gamma$.
	\end{proposition}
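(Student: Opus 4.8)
The plan is to argue through the membership of $\psi\to\psi'$ in every MCS of $\MCS$, and then lift this to the universal modality using Proposition~\ref{prop:allphiImplyUphi}. The crucial observation is that the canonical relation $\rel{\lr{\psi',\bot,\phi'}}_c$ builds the \emph{source} condition $\psi'\in\Lef{w}$ directly into its definition: by the clause $w\rel{\lr{\psi',\bot,\phi'}}_c w'$ iff $\psi'\in\Lef{w}$ and $\Rig{w'}=\phi'^{\psi'}$, any state $w$ that possesses an $a$-successor must already satisfy $\psi'\in\Lef{w}$. Hence executability of $a$ at $w$ immediately yields $\psi'\in\Lef{w}$, independently of which successor witnesses it.

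Concretely, first I would fix an arbitrary $\Delta\in\MCS$ and assume $\psi\in\Delta$. By Proposition~\ref{prop:allMCSinS} there is a state $w\in\calS^c$ with $\Lef{w}=\Delta$, so $\psi\in\Lef{w}$ and $w$ is a $\psi$-state. The hypothesis then gives that $a=\lr{\psi',\bot,\phi'}$ is executable at $w$, i.e. there is some $w'$ with $w\rel{\lr{\psi',\bot,\phi'}}_c w'$; by the definition of the canonical relation this forces $\psi'\in\Lef{w}=\Delta$. Thus $\psi\in\Delta$ implies $\psi'\in\Delta$, and since $\Delta$ is maximal consistent, $\psi\to\psi'\in\Delta$. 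As $\Delta$ was arbitrary, $\psi\to\psi'\in\Delta$ for every $\Delta\in\MCS$, so Proposition~\ref{prop:allphiImplyUphi} delivers $\U(\psi\to\psi')\in\Delta$ for all $\Delta\in\MCS$; taking $\Delta=\Gamma$ (which lies in $\MCS$ since $\Gamma|_{\Khmm}=\Gamma|_{\Khmm}$) yields $\U(\psi\to\psi')\in\Gamma$, as required.

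The argument is short once the right viewpoint is adopted, and the main thing to get right is precisely that viewpoint: one should resist the temptation to reason about the existence of a suitable target state $w'$ (which would needlessly drag in whether $\phi'$ is satisfiable across $\MCS$), and instead read off $\psi'\in\Lef{w}$ purely from the source side of the relation clause. The only other ingredients are the surjectivity of $w\mapsto\Lef{w}$ onto $\MCS$ from Proposition~\ref{prop:allMCSinS}, which lets us pass from ``all $\psi$-states'' to ``all MCSs in $\MCS$'', and Proposition~\ref{prop:allphiImplyUphi}, which is exactly the device for converting a fact holding in every $\Delta\in\MCS$ into a $\U$-statement in $\Gamma$.
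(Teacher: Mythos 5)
Your proposal is correct and follows essentially the same route as the paper's proof: read off $\psi'\in\Lef{w}$ from the source condition in the definition of $\rel{\lr{\psi',\bot,\phi'}}_c$, transfer this to every $\Delta\in\MCS$ (via Proposition~\ref{prop:allMCSinS}) to get $\psi\to\psi'\in\Delta$, and then apply Proposition~\ref{prop:allphiImplyUphi}. The paper is merely terser, leaving the appeal to Proposition~\ref{prop:allMCSinS} implicit where you make it explicit.
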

	\begin{proof}
		Suppose that every $\psi$-state has an outgoing $a$-transition,  then by the definition of $\R^c$, $\psi'$ is in all the $\psi$-states. For each $\Delta\in\Phi_\Gamma$, either $\psi\not\in \Delta$, or $\psi\in\Delta$ thus $\psi'\in \Delta$. Now by the fact that $\Delta$ is maximally consistent it is not hard to show $\psi\to\psi'\in\Delta$ in both cases.  By Proposition~\ref{prop:allphiImplyUphi}, $\U(\psi\to\psi') \in \Delta$ for all $\Delta\in\Phi_\Gamma.$ It follows by $\Gamma\in\Phi_\Gamma$ that $\U(\psi\to\psi') \in\Gamma$.
	\end{proof}
	
	\begin{proposition}\label{prop:IMPinTail}
		Given $w\in\calS^c$ and $a=\lr{\psi,\bot,\phi'}$ or $\lr{\chi^{\psi},\phi'}\in\Act_\Gamma$ such that $a$ is executable at $w$, if $\phi\in\Lef{w'}$ for each $w'$ with $w\rel{a}w'$ then $\U(\phi'\to\phi)\in\Gamma$.
	\end{proposition}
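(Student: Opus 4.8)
The plan is to mirror the proof of Proposition~\ref{prop:IMPinHead}, but on the successor side, splitting into the two possible shapes of $a$. In each case I will show that the left components $\Lef{w'}$ of the $a$-successors $w'$ of $w$ range over exactly those MCS in $\MCS$ that contain $\phi'$; the hypothesis then forces $\phi$ into every such MCS, and Proposition~\ref{prop:allphiImplyUphi} upgrades this to $\U(\phi'\to\phi)\in\Gamma$.

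First suppose $a=\lr{\psi,\bot,\phi'}\in\Act_\Gamma$. Since $a$ is executable at $w$, the definition of the canonical relation gives $\psi\in\Lef{w}$, and $w\rel{a}w'$ holds precisely when $\Rig{w'}={\phi'}^\psi$. A pair $(\Delta',{\phi'}^\psi)$ lies in $\calS^c$ exactly when $\Delta'\in\MCS$ and $\phi'\in\Delta'$, the membership of $a$ in $\Act_\Gamma$ supplying the remaining clause in the definition of $\calS^c$. Hence the $a$-successors of $w$ are exactly the states $(\Delta',{\phi'}^\psi)$ with $\Delta'\in\MCS$ and $\phi'\in\Delta'$, so their left components are exactly the members of $\MCS$ containing $\phi'$.

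Next suppose $a=\lr{\chi^\psi,\phi'}\in\Act_\Gamma$. Executability gives $\Rig{w}=\chi^\psi$, and $w\rel{a}w'$ holds precisely when $\phi'\in\Lef{w'}$; thus the $a$-successors of $w$ are all $w'\in\calS^c$ with $\phi'\in\Lef{w'}$. The point needing care here is that every relevant MCS is genuinely witnessed as such a successor: given $\Delta'\in\MCS$ with $\phi'\in\Delta'$, Proposition~\ref{prop:allMCSinS} yields some $w'\in\calS^c$ with $\Lef{w'}=\Delta'$, and since $\phi'\in\Lef{w'}$ while $\Rig{w}=\chi^\psi$, this $w'$ is an $a$-successor of $w$. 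So once again the left components of the $a$-successors are exactly the members of $\MCS$ containing $\phi'$.

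In both cases the hypothesis that $\phi\in\Lef{w'}$ for every $a$-successor $w'$ now reads: for every $\Delta'\in\MCS$, $\phi'\in\Delta'$ implies $\phi\in\Delta'$. Since each $\Delta'$ is maximal consistent, in either case $\phi'\to\phi\in\Delta'$ (either $\neg\phi'\in\Delta'$ or $\phi\in\Delta'$), so $\phi'\to\phi$ lies in every member of $\MCS$. By Proposition~\ref{prop:allphiImplyUphi} we get $\U(\phi'\to\phi)\in\Delta'$ for all $\Delta'\in\MCS$, and in particular $\U(\phi'\to\phi)\in\Gamma$ since $\Gamma\in\MCS$. The only genuinely non-routine step is the Case~2 observation that the successors of $w$, although constrained to lie in $\calS^c$, still realize every $\phi'$-containing MCS as a left component; this is precisely what Proposition~\ref{prop:allMCSinS} guarantees.
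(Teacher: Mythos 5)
Your proof is correct and follows essentially the same route as the paper's: in both cases you show that every $\phi'$-containing member of $\MCS$ occurs as the left component of some $a$-successor of $w$ (via the definition of $\calS^c$ for $\lr{\psi,\bot,\phi'}$ and via Proposition~\ref{prop:allMCSinS} for $\lr{\chi^\psi,\phi'}$), and then lift $\phi'\to\phi$ to $\U(\phi'\to\phi)\in\Gamma$ by Proposition~\ref{prop:allphiImplyUphi}. No gaps.
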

	\begin{proof}
		Firstly, we focus on the case of $a=\lr{\psi,\bot,\phi'}$.
		For each $\Delta\in\Phi_\Gamma$ with $\phi'\in \Delta$, we have $v=(\Delta,\phi'^\psi)\in\calS^c$. Since $\lr{\psi,\bot,\phi'}$ is executable at $w$, it means that $\psi\in\Lef{w}$. By the definition, it follows that $w\rel{a}v$. Since $\phi\in\Lef{w'}$ for each $w'$ with $w\rel{a}w'$, it follows that $\phi\in\Lef{v}$. Therefore, we have $\phi\in\Delta$ for each $\Delta\in\Phi_\Gamma$ with $\phi'\in\Delta$, namely $\phi'\to\phi\in \Delta$ for all $\Delta\in\Phi_\Gamma$. It follows by Proposition \ref{prop:allphiImplyUphi} that $\U(\phi'\to\phi)\in\Gamma$.

		Secondly, we focus on the case of $a=\lr{\chi^\psi,\phi'}$. For each $\Delta\in\Phi_\Gamma$ with $\phi'\in \Delta$, it follows by Proposition \ref{prop:allMCSinS} that there exists $v\in\calS^c$ such that $\Lef{v}=\Delta$. Since $a$ is executable at $w$, it follows that $w\rel{a}v$. Since $\phi\in\Lef{w'}$ for each $w'$ with $w\rel{a}w'$, it follows that $\phi\in\Lef{v}$. Therefore, we have shown that $\phi'\in\Delta$ implies $\phi\in\Delta$ for all $\Delta\in\Phi_\Gamma$. It follows by Proposition \ref{prop:allphiImplyUphi} that $\U(\phi'\to\phi)\in\Gamma$.
	\end{proof}
\indent Before proving the truth lemma, we first need a handy result.  
\begin{proposition}\label{prop:phi_n}
Given a non-empty sequence $\sigma=a_1\cdots a_n\in\Act^*_\Gamma$ where $a_i=\lr{\psi_i,\bot,\phi_i}$ or $a_i=\lr{\chi^{\psi_i}_i,\phi_i}$ for each $1\leq i\leq n$, we have $\Khm{\psi,\chi,\phi_i}\in\Gamma$ for all $1\leq i\leq n $ if for each $\psi$-states $w\in \calS^c$:
\begin{itemize}
\item $\sigma$ is strongly executable at $w$;
\item $w\rel{\sigma_j}t'$ implies $\chi\in\Lef{t'}$ for all $1\leq j <n$.
\end{itemize} 
\end{proposition}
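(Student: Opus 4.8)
The plan is to prove the statement by strong induction on $i$, establishing $\Khm{\psi,\chi,\phi_i}\in\Gamma$ for each $1\le i\le n$. Throughout, let $R_k$ denote the set of states $t'\in\calS^c$ reachable from some $\psi$-state by $\sigma_k$ (so $R_0$ is the set of all $\psi$-states). If there is no $\psi$-state at all, then by Proposition~\ref{prop:allMCSinS} every MCS contains $\neg\psi$, so $\U\neg\psi\in\Gamma$ by Proposition~\ref{prop:allphiImplyUphi}, and $\Univers$ together with $\AxRepKhm$ (using the derivable $\U(\bot\to\phi_i)$ and $\U(\bot\to\chi)$) yields $\Khm{\psi,\chi,\phi_i}\in\Gamma$ for every $i$; so I may assume $R_0\neq\emptyset$, whence $R_k\neq\emptyset$ for all $k<n$ by strong executability. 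The central observation driving both the base case and the inductive step is that the markers pin down $R_k$ almost completely: writing the last letter $a_k$ of $\sigma_k$ as $\lr{\psi_k,\bot,\phi_k}$ or $\lr{\chi_k^{\psi_k},\phi_k}$, firing it from the nonempty $R_{k-1}$ reaches \emph{every} state admitted by the relevant clause of $\calR^c$, so that the collection $\{\Lef{w}\mid w\in R_k\}$ is exactly the set of \emph{all} MCSs containing $\phi_k$.

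First I would treat the case where $a_i=\lr{\psi_i,\bot,\phi_i}$ is of the first kind. Since $a_i$ is executable at every $w\in R_{i-1}$ and its executability depends only on the left-component, its source condition $\psi_i$ lies in every MCS containing $\phi_{i-1}$ (for $i=1$, in every $\psi$-state); hence $a_i$ is executable at \emph{every} $\phi_{i-1}$-state and Proposition~\ref{prop:IMPinHead} gives $\U(\phi_{i-1}\to\psi_i)\in\Gamma$. Then $\ThKhmL$ applied to $\Khm{\psi_i,\bot,\phi_i}\in\Gamma$ yields $\Khm{\phi_{i-1},\bot,\phi_i}\in\Gamma$, which $\ThKhmM$ widens to $\Khm{\phi_{i-1},\chi,\phi_i}\in\Gamma$. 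Because the states of $R_{i-1}$ are intermediate (here $1\le i-1<n$) they all satisfy $\chi$, so Proposition~\ref{prop:IMPinTail} gives $\U(\phi_{i-1}\to\chi)\in\Gamma$; composing the induction hypothesis $\Khm{\psi,\chi,\phi_{i-1}}$ with $\Khm{\phi_{i-1},\chi,\phi_i}$ via $\COMP$ delivers $\Khm{\psi,\chi,\phi_i}$. (For $i=1$ one obtains $\Khm{\psi,\bot,\phi_1}$ directly and widens it.)

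The hard case, and the main obstacle, is $a_i=\lr{\chi_i^{\psi_i},\phi_i}$, for which $\Khm{\psi_i,\chi_i,\phi_i},\neg\Khm{\psi_i,\bot,\phi_i}\in\Gamma$. Here the naive reduction of $\Khm{\psi,\chi,\phi_i}$ to $\Khm{\psi,\chi,\chi_i}\land\Khm{\chi_i,\chi,\phi_i}$ fails for exactly the reason flagged before the canonical construction: $\Khm{\psi_i,\chi_i,\phi_i}$ guarantees success only from $\psi_i$-states, not from arbitrary $\chi_i$-states. The marker mechanism is what repairs this. Executability of $a_i$ forces $\Rig{w}=\chi_i^{\psi_i}$ for every $w\in R_{i-1}$, and this uniform-marker requirement forces the preceding letter to be the \emph{matching} first-kind action $a_{i-1}=\lr{\psi_i,\bot,\chi_i}$: any other first-kind choice would stamp $R_{i-1}$ with a different marker, while a second-kind $a_{i-1}$ would make $R_{i-1}$ the set of all $\phi_{i-1}$-states, which includes the $\top^\top$-marked copies of its MCSs. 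Consequently $R_{i-2}$ (or $R_0$, if $i=2$) consists of $\psi_i$-states, so Proposition~\ref{prop:IMPinHead} gives $\U(\phi_{i-2}\to\psi_i)\in\Gamma$ (or $\U(\psi\to\psi_i)$); $\ThKhmL$ turns $\Khm{\psi_i,\chi_i,\phi_i}$ into $\Khm{\phi_{i-2},\chi_i,\phi_i}$, and $\ThKhmM$ widens $\chi_i$ to $\chi$ using $\U(\chi_i\to\chi)\in\Gamma$, which holds by Proposition~\ref{prop:IMPinTail} because the $a_{i-1}$-successors $R_{i-1}$ realize all MCSs containing $\chi_i$ and all of them satisfy the intermediate constraint $\chi$. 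Finally $\COMP$ combines the induction hypothesis $\Khm{\psi,\chi,\phi_{i-2}}$ with $\Khm{\phi_{i-2},\chi,\phi_i}$ and $\U(\phi_{i-2}\to\chi)$ to yield $\Khm{\psi,\chi,\phi_i}$ (for $i=2$ the widened $\Khm{\psi,\chi,\phi_i}$ is already the goal).

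The crux throughout is the bookkeeping of the reached sets $R_k$ through their markers, and the genuinely delicate point — where I expect the real work to lie — is the justification that a second-kind letter must be preceded by its matching first-kind letter. This is precisely what lets the two-step witness $\lr{\psi_i,\bot,\chi_i}\lr{\chi_i^{\psi_i},\phi_i}$ for $\Khm{\psi_i,\chi_i,\phi_i}$ be re-anchored \emph{two} steps back at $\phi_{i-2}$, where uniform satisfaction of $\psi_i$ is available, rather than at the intervening $\chi_i$-stage where it is not; this is exactly the obstruction that defeats a one-step compositional reduction. The degenerate possibilities (no $\psi$-state, or markers collapsing to $\top^\top$) I would dispose of directly by $\Univers$ and the weakening axioms $\AxRepKhm$, $\ThKhmL$, and $\ThKhmM$.
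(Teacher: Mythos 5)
Your proposal follows the paper's own proof in all essentials: induction on $i$, the marker bookkeeping that forces a second-kind letter to be immediately preceded by its matching first-kind letter $\lr{\psi_i,\bot,\chi_i}$, the use of Propositions \ref{prop:IMPinHead} and \ref{prop:IMPinTail} to extract the needed universal implications, and the final assembly via \ThKhmL, \ThKhmM\ and \COMP\ (re-anchoring the two-step witness at $\phi_{i-2}$ rather than at the intervening $\chi_i$-stage is exactly the paper's move). Your global description of $R_k$ as realizing precisely the MCSs containing $\phi_k$ is a slightly tidier packaging of what the paper does locally with \ref{prop:IMPinTail}, but it is the same argument.

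There are, however, two loose ends. First, your induction has no base case for a second-kind first letter: if $a_1=\lr{\chi_1^{\psi_1},\phi_1}$, there is no ``preceding letter'' to be forced into the matching shape, and neither of your two cases produces $\Khm{\psi,\chi,\phi_1}\in\Gamma$. The paper shows this situation simply cannot arise, because executability of $a_1$ at every $\psi$-state would force all $\psi$-states to carry the single marker $\chi_1^{\psi_1}$. Second, and relatedly, your reason for excluding a second-kind $a_{i-1}$ before a second-kind $a_i$ --- that $R_{i-1}$ would then be the set of all $\phi_{i-1}$-states and hence contain the $\top^\top$-marked copies --- only shows that the required marker would have to be $\top^\top$; it does not yet yield a contradiction in the degenerate case $\chi_i^{\psi_i}=\top^\top$. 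The paper closes both holes with one device: given any marker $\chi^{\psi'}$ occurring in the construction, pick $\chi_0$ with $\vdash\chi_0\lra\chi$ but $\chi_0\neq\chi$; by \RE\ the marker $\chi_0^{\psi'}$ is also legitimate, so every relevant MCS appears in the target set with at least two syntactically distinct markers, and uniform-marker executability of a second-kind action on such a set is impossible. With that observation inserted at both points your argument is complete and coincides with the paper's.
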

\begin{proof}
		If there is no $\psi$-state in $\calS^c$, it follows that $\neg\psi\in \Lef{w'}$ for each $w'\in\calS^c$. It follows by Proposition \ref{prop:allMCSinS} that $\neg\psi\in\Delta$ for all $\Delta\in\Phi_\Gamma$. By Proposition \ref{prop:allphiImplyUphi}, we have $\U\neg\psi\in \Gamma$. By \Univers, $\Khm{\psi,\bot,\bot}\in \Gamma$. Since $\vdash \bot\to\chi$ and $\vdash\bot\to \phi$. Then by \NECU, we have $\vdash\U(\bot\to\chi)$ and $\vdash\U(\bot\to\phi)$. By $\ThKhmM$ and $\ThKhmR$, it is obvious that $\Khm{\psi,\chi,\phi}\in\Gamma$.
		
		Next, assuming $v\in\calS^c$ is a $\psi$-state, we will show $\Khm{\psi,\chi,\phi}\in\Gamma$.
There are two cases: $n=1$ or $n\geq 2$. For the case of $n=1$, we will prove it directly; for the case of $n\geq 2$, we will prove it by induction on $i$.
			\begin{itemize}
				\item $n=1$.  If $a_1$ is in the form of $\lr{\chi_1^{\psi_1},\phi_1}$, by the definition of $\rel{\lr{\chi_1^{\psi_1},\phi_1}}$ it follows that $\Rig{w}=\chi_1^{\psi_1}$ for each $\psi$-state $w$. Let $\chi_0$ be a formula satisfying that $\vdash\chi_0\lra\chi_1$ and $\chi_0\neq \chi_1$. By the rule of Replacement of Equals \RE, it follows that $\lr{\chi_0^{\psi_1},\phi_1}\in\Act_\Gamma$. Let $w'=(\Lef{v},\chi_0^{\psi_1})$  then it follows that $w'\in\calS^c$. Since $\psi\in\Lef{v}$, then we have $\psi\in\Lef{w'}$. However, since $\Rig{w'}=\chi_1^{\psi_1}\not=\chi_0^{\psi_1}$, $\sigma=\lr{\chi_1^{\psi_1},\phi_1}$ is not executable at the $\psi$-state $w'$, contradicting the assumption that $\sigma$ is strongly executable at all $\psi$-states. Therefore, we know that $a_1$ cannot be in the form of $\lr{\chi_1^{\psi_1},\phi_1}$. 
                
                \medskip
				If $a_1=\lr{\psi_1,\bot,\phi_1}$, it follows that $\Khm{\psi_1,\bot,\phi_1}\in\Gamma$. Since $a_1$ is executable at each $\psi$-state, it follows by Proposition \ref{prop:IMPinHead} that $\U(\psi\to\psi_1)\in\Gamma$. 
				Since $\Khm{\psi_1,\bot,\phi_1}\in\Gamma$, it follows by \ThKhmL\ that $\Khm{\psi,\bot,\phi_1}\in\Gamma$. By \NECU\ and \ThKhmM, it is clear that  $\Khm{\psi,\chi,\phi_1}\in\Gamma$. 
				
\item $n\geq 2$.			
By induction on $i$, next we will show that 					 $\Khm{\psi,\chi,\phi_i}\in\Gamma$ 
for each $1\leq i\leq  n$.
 For the case of $i=1$, with the similar proof as in the case of $n=1$, we can show that $a_1$ can only be $\lr{\psi_1,\bot,\phi_1}$ and $\U(\psi\to\psi_1)\in\Gamma$. Therefore by $\AxRepKhm$ we have $\Khm{\psi,\chi,\phi_1}\in\Gamma$.	
					Under the induction hypothesis (IH) that $\Khm{\psi,\chi,\phi_i}\in\Gamma$ for each $1\leq i\leq  k$, we will show that $\Khm{\psi,\chi,\phi_{k+1}}\in\Gamma$, where $1\leq k\leq n-1$.
					Because $\sigma$ is strongly executable at $v$, it follows that there are $w',v'\in\calS^c$ such that
					\[\xymatrix{
						v\ar[r]^{a_1}&\cdots\ar[r]^{a_{k-1}}&w'\ar[r]^{a_k}&v'\ar[r]^{a_{k+1}}&\cdots\ar[r]^{a_n}&t.\\
					} \]
					Moreover, for each $t'$ with $w'\rel{a_{k}}t'$ we have  $\chi\in\Lef{t'}$. It follows by Proposition \ref{prop:IMPinTail} that $\U(\phi_{k}\to \chi)\in\Gamma \ (\blacktriangle)$. Proceeding, there are two cases of $a_{k+1}$:
					\begin{itemize}
						\item $a_{k+1}=\lr{\psi_{k+1},\bot,\phi_{k+1}}$. Since $\sigma$ is strongly executable at $v$, it follows that for each $t'$ with $w'\rel{a_{k}}t'$ we know that $a_{k+1}$ is executable at each $t'$. It follows by the definition of $\rel{\lr{\psi_{k+1},\bot,\phi_{k+1}}}$ that $\psi_{k+1}\in \Lef{t'}$. Moreover, since $a_k$ is executable at $w'$, it follows by Proposition \ref{prop:IMPinTail} that $\U(\phi_{k}\to\psi_{k+1})\in\Gamma$. Since $a_{k+1}\in\Act_\Gamma$, it then follows that $\Khm{\psi_{k+1},\bot,\phi_{k+1}}\in\Gamma$. It then follows by \ThKhmL\ that $\Khm{\phi_{k},\bot,\phi_{k+1}}\in\Gamma$. Since $\vdash\U(\bot\to\chi)$, it follows by \ThKhmM\ that $\Khm{\phi_{k},\chi,\phi_{k+1}}\in\Gamma$. Since by IH we have that $\Khm{\psi,\chi,\phi_{k}}\in\Gamma$, It follows from $(\blacktriangle)$ and \COMP\ that $\Khm{\psi,\chi,\phi_{k+1}}\in\Gamma$.
						\item $a_{k+1}=\lr{\chi_{k+1}^{\psi_{k+1}},\phi_{k+1}}$.
						Since $\sigma$ is strongly executable at $v$, it follows that for each $t'$ with $w'\rel{a_{k}}t'$ we know that $a_{k+1}$ is executable at $t'$. Then we have that $\Rig{t'}=\chi_{k+1}^{\psi_{k+1}}$ for each $t'$ with $w'\rel{a_{k}}t'$. \\
                        Note that the action $a_{k}$ cannot be in the form of $\lr{\chi_{k}^{\psi_{k}},\phi_{k}}$. Suppose it can be, let $v''=(\Lef{v'},\chi_0^{\psi_{k+1}})$ where $\vdash\chi_0\lra \chi_{k+1}$ and $\chi_0\neq \chi_{k+1}$. Since $w'\rel{a_k}v'$, it follows that $\phi_k\in\Lef{v'}$. Then it follows by the definition of transitions that $w'\rel{a_{k}}v''$. However, we know that $\Rig{v''}\neq \chi_{k+1}^{\psi_{k+1}}$ thus $a_{k+1}=\lr{\chi_{k+1}^{\psi_{k+1}},\phi_{k+1}}$ is not executable at $v''$, contradicting the strong executability. Therefore, we know that $a_{k}$ cannot be in the form of $\lr{\chi_{k}^{\psi_{k}},\phi_{k}}$.
						
						Now $a_{k}=\lr{\psi_{k},\bot,\phi_{k}}$. Since $w'\rel{a_{k}}v'$ and $a_{k+1}=\lr{\chi_{k+1}^{\psi_{k+1}},\phi_{k+1}}$ is executable at $v'$, we have $\Rig{v'}=\phi_k^{\psi_k}=\chi_{k+1}^{\psi_{k+1}}$ by definition of transitions. It follows that $\psi_{k}=\psi_{k+1}$ and $\phi_{k}=\chi_{k+1}$. Since $a_{k+1}\in\Act_\Gamma$, it follows that $\Khm{\psi_{k+1},\chi_{k+1},\phi_{k+1}}\in\Gamma$. Thus, we have $\Khm{\psi_{k},\phi_{k},\phi_{k+1}}\in\Gamma$. 
						By $(\blacktriangle)$ and \ThKhmM\ we then have that $\Khm{\psi_{k},\chi,\phi_{k+1}}\in\Gamma \ (\blacktriangledown)$. If $k=1$, by Proposition~\ref{prop:IMPinHead} it is easy to show that $\U(\psi\to\psi_1)\in\Gamma$. Then by \ThKhmL\ we have $\Khm{\psi,\chi,\phi_{k+1}}\in\Gamma$. If $k>1$, there is a state $w''$ such that
						\[\xymatrix{
							v\ar[r]^{a_1}&\cdots\ar[r]^{a_{k-2}}&w''\ar[r]^{a_{k-1}}&w'\ar[r]^{a_k}&v'\ar[r]^{a_{k+1}}&\cdots\ar[r]^{a_n}&t.\\
						} \]
						Since $\sigma$ is strongly executable at $v$, it follows that for each $t'$ with $w''\rel{a_{k-1}}t'$ we have $a_{k}$ is executable at $t'$. It follows by the definition of $\rel{\lr{\psi_k,\bot,\phi_k}}$, it follows that  $\psi_k\in\Lef{t'}$ for each $t'$ with $w''\rel{a_{k-1}}t'$.
						Since $a_{k-1}$ is executable at $w''$, it follows by Proposition \ref{prop:IMPinTail} that $\U(\phi_{k-1}\to \psi_{k})\in \Gamma$.Moreover, since $v\rel{\sigma_{k-1}}t'$ for each $t' $ with $w''\rel{a_{k-1}}t'$, it follows that $\chi\in\Lef{t'}$. Thus by Proposition \ref{prop:IMPinTail} again, we have $\U(\phi_{k-1}\to \chi)\in \Gamma$. Since we have proved $(\blacktriangledown)$, it follows by \ThKhmL\ that $\Khm{\phi_{k-1},\chi,\phi_{k+1}}\in\Gamma$. Since by IH we have $\Khm{\psi,\chi,\phi_{k-1}}\in\Gamma$, it follows by \COMP\ that $\Khm{\psi,\chi,\phi_{k+1}}\in\Gamma$. 
					\end{itemize}				
\end{itemize}
\end{proof}	
Now we are ready to prove the truth lemma. 	
	\begin{lemma}
		For each $\phi$, we have $\calM^c_\Gamma,w\vDash\phi$ iff $\phi\in \Lef{w}$.
	\end{lemma}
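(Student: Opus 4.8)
The plan is to prove the truth lemma by induction on the structure of the formula. The propositional cases are routine: for $p$ we unwind the definition of $\calV^c$, and for $\neg$ and $\land$ we use the induction hypothesis (IH) together with the fact that each $\Lef{w}$ is a maximal consistent set. All the content is in the case $\Khm{\psi,\chi,\phi}$. Since $\Khmm$-formulas are global and $\Lef{w}\in\MCS$ for every $w\in\calS^c$, Proposition~\ref{prop:ShareKHow} lets me replace the membership claim ``$\Khm{\psi,\chi,\phi}\in\Lef{w}$'' by ``$\Khm{\psi,\chi,\phi}\in\Gamma$'' throughout, so it suffices to prove $\Khm{\psi,\chi,\phi}\in\Gamma$ iff $\calM^c_\Gamma,w\vDash\Khm{\psi,\chi,\phi}$. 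I will freely use the IH on the strictly smaller subformulas $\chi$ and $\phi$ to turn ``$v\vDash\chi$'' into ``$\chi\in\Lef{v}$'' and ``$v\vDash\phi$'' into ``$\phi\in\Lef{v}$''.

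For the direction $\Khm{\psi,\chi,\phi}\in\Gamma\Rightarrow\calM^c_\Gamma,w\vDash\Khm{\psi,\chi,\phi}$ I exhibit a witness plan, splitting on whether $\Khm{\psi,\bot,\phi}\in\Gamma$. If it is, then $\lr{\psi,\bot,\phi}\in\Act_\Gamma$ and I take the one-step plan $\sigma=\lr{\psi,\bot,\phi}$: at any $\psi$-state $w$, weakening to $\Khm{\psi,\top,\phi}$ by \ThKhmM\ and invoking Proposition~\ref{prop:OneStepExistence} gives some $\Delta'$ with $\phi\in\Delta'$, so $(\Delta',\phi^\psi)\in\calS^c$ witnesses executability, while every $\lr{\psi,\bot,\phi}$-successor carries $\phi$ in its left coordinate by definition; strong $\chi$-executability is vacuous for a single-step plan. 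If instead $\neg\Khm{\psi,\bot,\phi}\in\Gamma$, then \KHDisjunction\ yields $\Khm{\psi,\bot,\chi}\in\Gamma$, and both $\lr{\psi,\bot,\chi}$ and $\lr{\chi^\psi,\phi}$ lie in $\Act_\Gamma$. Here I use the two-step plan $\sigma=\lr{\psi,\bot,\chi}\lr{\chi^\psi,\phi}$: Proposition~\ref{prop:OneStepExistence} (applied to $\Khm{\psi,\top,\chi}$) supplies a first-step successor of the form $(\Delta',\chi^\psi)$, every such intermediate state carries $\chi$ by its marker, and weakening $\Khm{\psi,\chi,\phi}$ to $\Khm{\psi,\top,\phi}$ (again \ThKhmM) followed by Proposition~\ref{prop:OneStepExistence} produces a genuine $\phi$-state such as $(\Delta'',\top^\top)$, off which every $\chi^\psi$-marked state has a $\lr{\chi^\psi,\phi}$-edge. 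Hence $\sigma$ is strongly $\chi$-executable from every $\psi$-state and all its endpoints satisfy $\phi$; the sub-case with no $\psi$-state is vacuous.

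For the converse, assume $\calM^c_\Gamma,w\vDash\Khm{\psi,\chi,\phi}$ with witnessing plan $\sigma\in\Act_\Gamma^*$. If $\calS^c$ contains no $\psi$-state, then $\U\neg\psi\in\Gamma$ by Propositions~\ref{prop:allMCSinS} and~\ref{prop:allphiImplyUphi}, and \Univers\ together with \ThKhmM\ and \ThKhmR\ give $\Khm{\psi,\chi,\phi}\in\Gamma$ at once. If $\sigma=\epsilon$, then every $\psi$-state satisfies $\phi$, so $\psi\to\phi$ belongs to every $\Delta\in\MCS$ (using Proposition~\ref{prop:allMCSinS} and maximality); Proposition~\ref{prop:allphiImplyUphi} gives $\U(\psi\to\phi)\in\Gamma$, \EMPKhm\ gives $\Khm{\psi,\bot,\phi}\in\Gamma$, and \ThKhmM\ gives $\Khm{\psi,\chi,\phi}\in\Gamma$. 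The substantive case is $\sigma=a_1\cdots a_n$ with $n\geq1$ in the presence of a $\psi$-state: the semantic hypothesis says precisely that $\sigma$ is strongly executable from every $\psi$-state and that every intermediate state carries $\chi$ in its left coordinate, which are exactly the hypotheses of Proposition~\ref{prop:phi_n}. That proposition yields $\Khm{\psi,\chi,\phi_n}\in\Gamma$, where $\phi_n$ is the target formula of the final action $a_n$. Choosing a state $w''$ reached by $\sigma_{n-1}$ from a $\psi$-state, all of whose $a_n$-successors are $\sigma$-endpoints and hence satisfy $\phi$, Proposition~\ref{prop:IMPinTail} gives $\U(\phi_n\to\phi)\in\Gamma$, and \ThKhmR\ upgrades $\Khm{\psi,\chi,\phi_n}$ to $\Khm{\psi,\chi,\phi}\in\Gamma$, as required.

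The main obstacle is already isolated in Proposition~\ref{prop:phi_n}, which I assume as given: it is what prevents the deliberately liberal $\lr{\chi^\psi,\phi}$-transitions---connecting any $\chi^\psi$-marked state to any $\phi$-state---from spuriously validating knowing-how claims not in $\Gamma$, and its induction (observing that an action feeding a $\lr{\chi^\psi,\phi}$-edge must itself be a $\lr{\psi,\bot,\chi}$-edge, forcing the marker identity $\phi_k=\chi_{k+1}$) is the delicate core of the whole completeness argument. Within the construction direction the one point needing care is the executability of the second step of the two-step plan, i.e.\ the existence of a $\phi$-state; this is exactly what weakening $\Khm{\psi,\chi,\phi}$ to $\Khm{\psi,\top,\phi}$ and then applying Proposition~\ref{prop:OneStepExistence} secures.
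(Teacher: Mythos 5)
Your proof is correct and follows essentially the same route as the paper's: the same case split on $\Khm{\psi,\bot,\phi}\in\Gamma$ with one-step versus two-step witness plans (via \KHDisjunction\ and Proposition~\ref{prop:OneStepExistence}) for the construction direction, and the same appeal to Proposition~\ref{prop:phi_n} followed by Proposition~\ref{prop:IMPinTail} and \ThKhmR\ for the converse, with the vacuous and $\epsilon$ subcases handled identically. The reduction via Proposition~\ref{prop:ShareKHow} and the explicit identification of where the two-step plan's executability is secured are faithful to the paper's argument.
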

	\begin{proof}
		Boolean cases are trivial, and we only focus on the case of $\Khm{\psi,\chi,\phi}$.
		
		\textbf{Left to Right:} 
		If there is no state $w'$ such that $\M^c_\Gamma,w'\vDash\psi$, it follows by induction that $\neg\psi\in \Lef{w'}$ for each $w'\in\calS^c$. It follows by Proposition \ref{prop:allMCSinS} that $\neg\psi\in\Delta$ for all $\Delta\in\Phi_\Gamma$. By Proposition \ref{prop:allphiImplyUphi}, we have $\U\neg\psi\in \Lef{w}$. By \Univers, $\Khm{\psi,\bot,\bot}\in \Lef{w}$. Since $\vdash \bot\to\chi$ and $\vdash\bot\to \phi$. Then by \NECU, we have $\vdash\U(\bot\to\chi)$ and $\vdash\U(\bot\to\phi)$. By $\ThKhmM$ and $\ThKhmR$, it is obvious that $\Khm{\psi,\chi,\phi}\in\Lef{w}$.
		
		Next, assuming $\M^c_\Gamma,v\vDash\psi$ for some $v\in\calS^c$, we will show $\Khm{\psi,\chi,\phi}\in\Lef{w}$. Since $\M^c_\Gamma,w\vDash\Khm{\psi,\chi,\phi}$, it follows that there exists $\sigma\in\Act^*$ such that for each $\M^c_\Gamma,w'\vDash\psi$: $\sigma$ is strongly $\chi$-executable at $w'$ and $\M^c_\Gamma,v'\vDash\phi$ for all $v'$ with $w'\rel{\sigma}v'$. There are two cases: $\sigma$ is empty or not.
		\begin{itemize}
			\item $\sigma=\epsilon$. This means that $\M^c_\Gamma,w'\vDash\phi$ for each $\M^c_\Gamma,w'\vDash\psi$. It follows by induction that $\psi\in\Lef{w'}$ implies $\phi\in\Lef{w'}$. Thus, we have $\psi\to\phi\in\Lef{w'}$ for all $w'\in\calS^c$. By Proposition \ref{prop:allMCSinS}, we have $\psi\to\phi\in\Delta$ for all $\Delta\in\Phi_\Gamma$. It follows by Proposition \ref{prop:allphiImplyUphi} that $\U(\psi\to\phi)\in \Lef{w}$. It then follows by \EMPKhm\ that $\Khm{\psi,\bot,\phi}\in\Lef{w}$. By \NECU\ and \ThKhmM, it is easy to show that $\Khm{\psi,\chi,\phi}\in\Lef{w}$.
			\item $\sigma=a_1\cdots a_n$ where for each $1\leq i\leq n$, $a_i=\lr{\psi_i,\bot,\phi_i}$ or $a_i=\lr{\chi^{\psi_i}_i,\phi_i}$. 
            Since $\sigma$ is strongly $\chi$-executable at each $w'$ with $\calM^c_\Gamma, w'\vDash \psi$,  it follows by IH that for each $\psi$-state $w'$: $\sigma$ is strongly executable at $w'$ and $w'\rel{\sigma_j}t'$ implies $\chi\in\Lef{t'}$ for all $1\leq j <n$.	
By Proposition \ref{prop:phi_n}, we have that $\Khm{\psi,\chi,\phi_n}\in\Lef{v}$. Since $\M^c_\Gamma,v\vDash\psi$ and $\sigma$ is strongly $\chi$-executable at $v$ and $\M^c_\Gamma,v''\vDash\phi$ for each $v''$ with $v\rel{\sigma}v''$, it follows that there exists $v'$ such that $a_n$ is executable at $v'$ and $\M^c_\Gamma,v''\vDash\phi$ for each $v''$ with $v'\rel{a_n}v''$. (Please note that $v'=v$ if $n=1$.) Note that $a_n$ is either $\lr{\psi_n, \bot, \phi_n}$ or $\lr{\chi_n^{\psi_n}, \phi_n}$. It follows by Proposition \ref{prop:IMPinTail} and IH that $\U(\phi_n\to\phi)\in\Gamma$, then we have $\U(\phi_n\to\phi)\in\Lef{v}$. It follows by \ThKhmR\ and Proposition \ref{prop:ShareKHow} that $\Khm{\psi,\chi,\phi}\in\Lef{w}$. 
\end{itemize}

This completes the proof for $w\vDash \Khm{\psi,\chi,\phi}$ implies $\Khm{\psi,\chi,\phi}\in\Lef{w}.$
\medskip
		
		\textbf{Right to Left:} Suppose that $\Khm{\psi,\chi,\phi}\in\Lef{w}$, we need to show that $\M^c_\Gamma,w\vDash\Khm{\psi,\chi,\phi}$. There are two cases: there is a state $w'\in\calS^c$ such that $\M^c_\Gamma,w'\vDash\psi$ or not. If there is no such state, it follows $\M^c_\Gamma,w\vDash\Khm{\psi,\chi,\phi}$.
		
		For the second case, let $w'$ be a state such that $\M^c_\Gamma,w'\vDash\psi$. It follows by IH that $\psi\in\Lef{w'}$. Since we already have $\Khm{\psi,\chi,\phi}\in\Lef{w}$, it follows by Proposition \ref{prop:ShareKHow} that $\Khm{\psi,\chi,\phi}\in\Gamma$. Since $\vdash\U(\chi\to\top)$, it follows by \ThKhmM\ that $\Khm{\psi,\top,\phi}\in\Gamma$. It follows by Proposition \ref{prop:OneStepExistence} that there exists $\Delta'\in\Phi_\Gamma$ such that $\phi\in\Delta'$. There are two cases: $\Khm{\psi,\bot,\phi}\in \Gamma$ or not.
		\begin{itemize}
			\item $\Khm{\psi,\bot,\phi}\in\Gamma$. It follows that $a=\lr{\psi,\bot,\phi}\in\Act_\Gamma$. Therefore, we have $v=(\Delta',\phi^\psi)\in\calS^c$. Since $\psi\in\Lef{w'}$, it follows that $w'\rel{a}v$. Thus, $a$ is strongly $\chi$-executable at $w'$. What is more, $\phi\in\Lef{v'}$ for each $v'$ with $w'\rel{a}v'$ by the definition of the transition. It follows by IH that $\M^c_\Gamma,v'\vDash\phi $ for all $v'$ with $w'\rel{a}v'$. Therefore, we have $\M^c_\Gamma,w\vDash\Khm{\psi,\chi,\phi}$ witnessed by a single step $\sigma$. 
			\item $\neg\Khm{\psi,\bot,\phi}\in\Gamma$. It follows by \KHDisjunction\ that $\Khm{\psi,\bot,\chi}\in\Gamma$. We then have $a=\lr{\psi,\bot,\chi}\in\Act_\Gamma$ and $b=\lr{\chi^\psi,\phi}\in\Act_\Gamma$. Since $\Khm{\psi,\bot,\chi}\in\Gamma$ and $\vdash\U(\bot\to\top)$, it follows by \ThKhmM\ that $\Khm{\psi,\top,\chi}\in\Gamma$. It follows by Proposition \ref{prop:OneStepExistence} that there exists $\Delta''\in\Phi_\Gamma$ such that $\chi\in\Delta''$. Therefore, we have $t=(\Delta'',\chi^\psi)\in\calS^c$. Since there exists $\Delta'\in\Phi_\Gamma$ with $\phi\in\Delta'$, it follows by Proposition \ref{prop:allphiImplyUphi} that there is $t'\in\calS^c$ such that $\Lef{t'}=\Delta'$. Now, starting with any $\psi$-state, $a$ is clearly executable and it will lead to a $\chi$-state, and then by a $b$ step we will reach all the $\phi$ states. Therefore, by IH, we have that $ab$ is strongly $\chi$-executable at $w'$, and that for all $v'$ with $w'\rel{ab}v'$ we have $\M^c_\Gamma,v'\vDash\phi$. Therefore, we have $\M^c_\Gamma,w\vDash\Khm{\psi,\chi,\phi}$. Note that we do need a 2-step $\sigma$ in this case. 
		\end{itemize}
	\end{proof}
Now due to a standard Lindenbaum-like argument, each $\SKhm$-consistent set of formulas can be extended to a maximal consistent set $\Gamma$. Due to the truth lemma, $\M^c_\Gamma, (\Gamma, \top^\top)\vDash\Gamma.$ The completeness of $\SKhm$ follows immediately. 
\begin{theorem}
	$\SKhm$ is strongly complete w.r.t. the class of all models. 
\end{theorem}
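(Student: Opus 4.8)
The plan is to read off strong completeness from the truth lemma by the standard Henkin argument, in its contrapositive form: I will show that every $\SKhm$-consistent set of $\LKhm$ formulas is satisfiable at some state of some model. This suffices, since strong completeness says $\Sigma\vDash\phi$ implies $\Sigma\vdash\phi$; taking contrapositives, if $\Sigma\nvdash\phi$ then $\Sigma\cup\{\neg\phi\}$ is $\SKhm$-consistent, and its satisfiability at a state produces a model witnessing $\Sigma\nvDash\phi$ (Boolean formulas, and hence $\neg\phi$, are evaluated locally, so the designated state does matter here, whereas $\Khmm$-formulas hold globally).

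First I would take an arbitrary $\SKhm$-consistent set $\Sigma$ and extend it, by a routine Lindenbaum construction, to a maximal consistent set $\Gamma\supseteq\Sigma$; this only uses that $\SKhm$ is a normal-style system closed under $\MP$, $\NECU$, and $\SUB$. Every piece of canonical machinery developed above is parametrised by a fixed MCS, so I would instantiate the canonical model $\calM^c_\Gamma=\lr{\calS^c,\calR^c,\calV^c}$ for this particular $\Gamma$.

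The one point genuinely requiring a check is that $\Gamma$ itself is realised as the left component of some state of $\calM^c_\Gamma$. Since $\Gamma\in\Phi_\Gamma$ by the definition of $\Phi_\Gamma$, Proposition \ref{prop:allMCSinS} supplies a state $w\in\calS^c$ with $\Lef{w}=\Gamma$; concretely one takes $w=(\Gamma,\top^\top)$, which lies in $\calS^c$ because $\Khm{\top,\bot,\top}\in\Gamma$ (derivable from $\vdash\U(\top\to\top)$ via $\NECU$ and $\EMPKhm$) places $\lr{\top,\bot,\top}$ in $\Act_\Gamma$, while $\top\in\Gamma$. Applying the truth lemma at this state gives $\calM^c_\Gamma,(\Gamma,\top^\top)\vDash\phi$ for every $\phi\in\Gamma$, and in particular for every $\phi\in\Sigma\subseteq\Gamma$. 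Hence $\Sigma$ is satisfiable, which is exactly what the reformulation demanded.

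I do not expect any real obstacle at this stage: all the difficulty has already been discharged in the canonical-model construction and the truth lemma, whose $\Khmm$ direction---especially the two-step witness plans engineered through the $\chi^\psi$ markers and the inductive bookkeeping of Proposition \ref{prop:phi_n}---is where the substantive work resides. The completeness theorem is essentially the formal harvest of that lemma, and my only care would be to state the equivalence between ``every consistent set is satisfiable'' and strong completeness cleanly, and to note that the universal modality $\U$ behaves correctly on the canonical model so that no hidden locality assumption is smuggled in.
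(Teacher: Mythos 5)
Your proposal is correct and follows essentially the same route as the paper: Lindenbaum extension of a consistent set to an MCS $\Gamma$, instantiation of the canonical model $\calM^c_\Gamma$, and an appeal to the truth lemma at the state $(\Gamma,\top^\top)$ guaranteed by Proposition~\ref{prop:allMCSinS}. The paper compresses this into two sentences, so your version merely spells out the standard details more explicitly.
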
	

\section{Conclusions}
This paper generalizes the knowing how logic presented in \cite{Wang15lori} and 
proposes a ternary modal operator $\Khm{\psi,\chi,\phi}$ to express that the agent knows how to achieve $\phi$ given $\psi$ while maintaining $\chi$ in-between. This paper also presents a sound and complete axiomatization of this logic. Compared to the completeness proof in \cite{Wang15lori}, the proof here is much more complicated, and the  essential difference is that the state of the canonical model here is a pair consisting of a maximal consistent set and a marker of the form $\chi^\psi$ which indicates that this state has a $\lr{\psi,\bot,\chi}$-predecessor, in order to handle the intermediate constraints.

For future research, besides the obvious questions of decidability and model theory of the logic, we may give some alternative semantics to the same language by relaxing the strong executability. Intuitively, strongly executable plan may be too strong for knowledge-how in some cases. For example, if there is an action sequence $\sigma$ in the agent's ability map such that doing $\sigma$ at a $\psi$-state will always make the agent \textit{stop} on $\phi$ states, we can probably also say the agent knows how to achieve $\phi$ given $\psi$, e.g., I know how to start the engine in that old car, just turn the key several times until it starts,  and three times should suffice at most. Please note that there are two kinds of states on which the agent might stop: either states the agent achieves after doing $\sigma$ successfully, or states on which the agent is unable to continue executing the remaining actions. 

Another interesting topic is extending this logic with public announcement operators. Intuitively, $[\theta]\phi$ says that $\phi$ holds after the information $\theta$ is provided. The update of the new information amounts to the change of the background knowledge throughout the model, and this will affect the knowledge-how. For example, a doctor may not know how to treat a patient with the disease $p$ since he is worried that the only available medicine may potentially cause some very bad side-effect $r$, which can be expressed as $\neg\Khm{p,\neg r,\neg p}$. Suppose a new scientific discovery shows that the side-effect is not possible under the relevant circumstance, then the doctor should know how to treat the patient, which can be expresses as $[\neg r]\Khm{p,\neg r,\neg p}$.\footnote{However, the announcement operator $[\phi]$ is not reducible in $\LKhm$ as discussed in the full version of \cite{Wang15lori} which is under submission.}

Moreover, we can consider contingent plans which involve conditions based on the knowledge of the agent. A contingent plan is a partial function on the agent's belief space. Such plans make more sense when the agent has the ability of observations during the execution of the plan. To consider contingent plan, we need to extend the model (ability map) with an epistemic relation. We then can express knowledge-that and knowledge-how at the same time, and discuss their interactions in one unified logical framework. 

\bibliography{kh}

\begin{thebibliography}{10}

\bibitem{KandA15}
Thomas {\AA}gotnes, Valentin Goranko, Wojciech Jamroga, and Michael Wooldridge.
\newblock Knowledge and ability.
\newblock In Hans van Ditmarsch, Joseph Halpern, Wiebe van~der Hoek, and
  Barteld Kooi, editors, {\em Handbook of Epistemic Logic}, chapter~11, pages
  543--589. College Publications, 2015.

\bibitem{FWvD14}
Jie Fan, Yanjing Wang, and Hans van Ditmarsch.
\newblock Almost necessary.
\newblock In {\em Advances in Modal Logic Vol.10}, pages 178--196, 2014.

\bibitem{FWvD15}
Jie Fan, Yanjing Wang, and Hans van Ditmarsch.
\newblock Contingency and knowing whether.
\newblock {\em The Review of Symbolic Logic}, 8:75--107, 2015.

\bibitem{Gochet13}
Paul Gochet.
\newblock An open problem in the logic of knowing how.
\newblock In J.~Hintikka, editor, {\em Open Problems in Epistemology}. The
  Philosophical Society of Finland, 2013.

\bibitem{GW16}
Tao Gu and Yanjing Wang.
\newblock ``knowing value'' logic as a normal modal logic.
\newblock In {\em Advances in Modal Logic Vol.11}, 2016.
\newblock forthcoming.

\bibitem{Hintikka:kab}
J.~Hintikka.
\newblock {\em Knowledge and Belief: An Introduction to the Logic of the Two
  Notions}.
\newblock Cornell University Press, Ithaca N.Y., 1962.

\bibitem{LauWang}
Tszyuen Lau and Yanjing Wang.
\newblock Knowing your ability.
\newblock {\em The Philosophical Forum}, 2016.
\newblock forthcoming.

\bibitem{SW98}
David~E. Smith and Daniel~S. Weld.
\newblock Conformant graphplan.
\newblock In {\em AAAI 98}, pages 889--896, 1998.

\bibitem{Wright51}
G.~H. Von~Wright.
\newblock {\em An Essay in Modal Logic}.
\newblock North Holland, Amsterdam, 1951.

\bibitem{Wang15lori}
Yanjing Wang.
\newblock A logic of knowing how.
\newblock In {\em Proceedings of LORI 2015}, pages 392--405, 2015.

\bibitem{Wang15}
Yanjing Wang.
\newblock Representing imperfect information of procedures with hyper models.
\newblock In {\em Proceedings of ICLA 2015}, pages 218--231, 2015.

\bibitem{Wang16}
Yanjing Wang.
\newblock Beyond knowing that: a new generation of epistemic logics.
\newblock In Hans van Ditmarsch and Gabriel Sandu, editors, {\em Jaakko
  Hintikka on knowledge and game theoretical semantics}. Springer, 2016.
\newblock forthcoming.

\bibitem{WangF13}
Yanjing Wang and Jie Fan.
\newblock Knowing that, knowing what, and public communication: Public
  announcement logic with {Kv} operators.
\newblock In {\em Proceedings of IJCAI 13}, pages 1147--1154, 2013.

\bibitem{WangF14}
Yanjing Wang and Jie Fan.
\newblock Conditionally knowing what.
\newblock In {\em Advances in Modal Logic Vol. 10}, pages 569--587, 2014.

\bibitem{YLW15}
Quan Yu, Yanjun Li, and Yanjing Wang.
\newblock A dynamic epistemic framework for conformant planning.
\newblock In {\em Proceedings of TARK'15}, pages 298--318. EPTCS, 2016.

\end{thebibliography}
\bibliographystyle{plain}

\end{document}